\newtheorem{proposition}{Proposition}
\newtheorem{observation}{Observation}
\title{On terminating improvement in two-player games}
\author[1]{St\'ephane Le Roux}
\affil[1]{D\'epartement d'informatique, Universit\'e libre de Bruxelles\\
  Belgique\\
  \texttt{Stephane.Le.Roux@ulb.ac.be}}
\subjclass{I.2.11}
\keywords{Nash equilibrium, structures, potential games, weakly terminating games}
\begin{document}

\maketitle

\begin{abstract}
A real-valued game has the finite improvement property (FIP), if starting from an arbitrary strategy profile and letting the players change strategies to increase their individual payoffs in a sequential but non-deterministic order always reaches a Nash equilibrium. \textit{E.g.}, potential games have the FIP. Many of them have the FIP by chance, though, since modifying even a single payoff may ruin the property. This article characterises (in quadratic time) the class of the finite games where the FIP not only holds but is also preserved when modifying all the occurrences of an arbitrary payoff. The characterisation relies on a pattern-matching sufficient condition for games (finite or infinite) to enjoy the FIP, and is followed by an inductive description of this class.

A real-valued game is weakly acyclic if the improvement described above can reach a Nash equilibrium. This article characterises the finite such games using Markov chains and almost sure convergence to equilibrium. It also gives an inductive description of the two-player such games. 
 \end{abstract}

\section{Introduction}

Game theory is the theory of competitive interactions between agents who have different interests. To describe how such real-world systems may stabilise, game theory especially relies on the notions of game and Nash equilibrium (NE for short), popularised by Nash~\cite{Nash50}. As defined in~\cite{SLR14} and below, a game structure is a function from a Cartesian product; and a game is a game structure plus one binary relation assigned to each component of the product. 

\begin{definition}[Game structures and games in normal form]\label{defn:gsnf}
Game structures in normal form are tuples $\langle A,(S_a)_{a\in A},O,v\rangle$ satisfying the following: 
\begin{itemize}
\item $A$ is a non-empty set (of players, or agents),
\item $\prod_{a\in A}S_a$ is a non-empty Cartesian product (whose elements are the strategy profiles and where $S_a$ represents the individual strategies available to player $a$),
\item $O$ is a non-empty set (of possible outcomes),
\item $v:\prod_{a\in A} S_a\to O$ (the outcome function that values the strategy profiles),
\end{itemize}
A game in normal form is a tuple $\langle G,(\prec_a)_{a\in A}\rangle$ where $G$ is a game structure in normal form, and each $\prec_a$ is a binary relation over $O$ (modelling the preference of player $a$).
\end{definition}

The following definitions are meant to stress that the Nash equilibria of a game are the terminal states of a dynamic, distributed process where the players improve upon their outcome in a non-deterministic and asynchronous way.

\begin{definition}[Convertibility, preference over profiles, improvement, and Nash equilibrium]\label{defn:asyn-improv}\hfill Let $\langle A,(S_a)_{a\in A},O,v,(\prec_a)_{a\in A}\rangle$ be a game in normal form and let $S:=\prod_{a\in A} S_a$.
\begin{itemize}
\item For $s,s' \in S$ let $s\stackrel{c}{\twoheadrightarrow}_as'$ denote the ability of Player $a$ to convert $s$ to $s'$ by changing her own individual strategy, formally $s\stackrel{c}{\twoheadrightarrow}_as':=\forall b\in A-\{a\},\,s_b=s'_b$. 

\item Let $s\prec_a s'$ denote $v(s)\prec_a v(s')$, to refer also to the induced preference over the profiles.

\item Let $\twoheadrightarrow_a\,:=\,\prec_a\cap\stackrel{c}{\twoheadrightarrow}_a$ be the individual improvement reductions of the players and let $\twoheadrightarrow\,:=\,\cup_{a\in A}\twoheadrightarrow_a$ be the collective improvement reduction.

\item A strategy profile $s \in S$ is a Nash equilibrium if the formula $\forall a\in A,\forall s'\in S,\,\neg(s \twoheadrightarrow_a s')$ holds, \textit{i.e.} if $s$ is a terminal state of the collective improvement $\twoheadrightarrow$.
\end{itemize}
\end{definition}

Let us now describe two "good reasons" for a game to have a Nash equilibrium, as opposed to having NE "by chance". The first reason is to have a suitable underlying structure: the first game below involves players $a$ and $b$ with strategy sets $\{a_1,a_2\}$ and $\{b_1,b_2\}$, respectively. Its strategy profile $(a_2,b_2)$ is an NE, since $a$ and $b$ cannot obtain payoffs greater than $1$ and $2$, respectively. It happens by chance, though: rewriting the payoff pair $(1,2)$ into $(1,0)$ produces a game without NE; whereas the second game (in extensive form) has an NE for structural reasons: rewriting arbitrarily the payoffs at the leaves of the tree still produces a game with NE, as proved by Kuhn in~\cite{Kuhn53}.  The underlying structures with this property were fully characterised in~\cite{SLR14}: if a two-player game structure always produces a determined game when its outcomes are arbitrarily instantiated with $(1,0)$ and $(0,1)$, it always produces a game with NE when arbitrarily instantiated with (finitely many) real-valued payoffs. 

\begin{tabular}{cccc}
$\begin{array}{c|c|c|c|}
	\multicolumn{1}{c}{}&
	  \multicolumn{1}{c}{b_1}&
	  \multicolumn{1}{c}{b_2}\\
	  \cline{2-3}
 	  a_1 &  1,0 & 0,3 \\
	  \cline{2-3}
	   a_2 & 0,1 & 1,2 \\
	  \cline{2-3}
	 \end{array}$
&
\begin{tikzpicture}[level distance=7mm]
\node{a}[sibling distance=18mm]
	child{node{b}[sibling distance=8mm]
		child{node{$1,5$}}
		child{node{$4,2$}}
	}
	child{node{b}[sibling distance=8mm]
		child{node{$8,3$}}
		child{node{$6,7$}}
	};
\end{tikzpicture}
&
$\begin{array}{|c@{\;\vline\;}c@{\;\vline\;}c|}
	\cline{1-3}
	 2,0 & 0,2 & 0,0\\
	\cline{1-3}
	 0,2 & 2,0 & 0,0\\
	\cline{1-3}
	 0,0 & 0,0 & 3,3\\
	\cline{1-3}
\end{array}$
&
$\begin{array}{|c@{\;\vline\;}c@{\;\vline\;}c|}
	\cline{1-3}
	 2,0 & 0,0 & 0,1\\
	\cline{1-3}
	 0,2 & 2,0 & 0,0\\
	\cline{1-3}
	 0,0 & 0,0 & 3,3\\
	\cline{1-3}
\end{array}$

\end{tabular}

\bigskip

The second good reason for a game to have NE is the finite improvement property (FIP, \textit{i.e.} $\twoheadrightarrow$ is terminating), as defined and studied by Monderer and Shapley in~\cite{MS96}. Note that the third game above has one NE (bottom-right corner), although $\twoheadrightarrow$ is not terminating when starting in the $2\times 2$ upper-left corner. It is terminating in the fourth game, though.

However, again, rewriting the top $(0,0)$ into $(0,2)$ in the fourth game above ruins the FIP, which is an invitation to combine the two "good reasons" discussed above into preserved FIP when rewriting payoffs. This article deals with the two-player case only. To that purpose, the subgames of a game in normal form are defined below by restriction of the outcome function $v$ and of the preferences to Cartesian subsets of the set of profiles.

\begin{definition}[Subgame]\label{def:subgame}
A subgame of a game $\langle A,(S_a)_{a\in A},O,v,(\prec_a)_{a\in A}\rangle$ in normal form is a game $\langle A,(S'_a)_{a\in A},O,v\mid_{S'},(\prec_a\mid_{S'\times S'})_{a\in A}\rangle$ where $S'_a\subseteq S_a$ and $S':=\prod_{a\in A}S'_a$. Forgetting about the preferences yields the similar notion of sub game structure.

For a two-player game with $A=\{a,b\}$, a pattern (resp. rectangle) is a subgame (resp. sub game structure) where $|S_a|=|S_b|=2$. 
\end{definition}

Section~\ref{sect:cife} defines generic patterns: two are strongly forbidden, one weakly forbidden. Then it shows first that the absence of these patterns in finite two-player games guarantees the FIP, and second that the absence of the strongly forbidden ones guarantees that the collective maximising, \textit{i.e.} best-response-like, improvement terminates or passes near an NE. In the first case, examples suggest that the condition is almost necessary. The first case is then invoked to prove that assigning arbitrary acyclic preferences to a given game structure always yields a terminating $\twoheadrightarrow$ iff the structure has no forbidden rectangle, \textit{i.e.} every rectangle has an edge with a repeated outcome. Examples from Section~\ref{sect:afi-3p} then suggest that the three-player case is much harder, and Section~\ref{sect:ei-ene-ig} shows that the absence of the three patterns in infinite two-player games with continuous payoffs on a compact set of profiles guarantees termination of the collective $\epsilon$-improvement. Then Section~\ref{sect:id-mwfs} gives two inductive descriptions of the two-player game structures without forbidden rectangles.

Let us give three remarks. First, some real-valued games without forbidden patterns are neither exact, nor weighted, nor ordinal potential games as defined by Monderer and Shapley~\cite{MS96}, and conversely. Second,  Section~\ref{sect:cife} could be useful for mechanism design, the art of making the players behave well. Indeed, Section~\ref{sect:cife} ensures that a distributed system based on an appropriate game structure will evolve (quickly) towards an NE regardless of the actual preferences of the players. Third, the FIP is a very strong property that most of the games do not enjoy. Even convergence towards mixed NE is usually not guaranteed, but Hart and Mas-Colell~\cite{HM00} showed convergence towards mixed correlated equilibrium in finite real-valued games, a concept that is more general than that of NE and that was introduced by Aumann~\cite{Aumann87}. On the contrary, the potential games and especially this article focus on games that guarantee convergence in finite time or almost surely towards a (non-mixed) NE.

Finally, there is a more general good reason than FIP for games to have NE: in weakly acyclic games (see \cite{Young93}) NE are $\twoheadrightarrow$-reachable from any profile. Whereas \cite{MAS07} designs a sophisticated stochastic process converging almost surely to NE in such games,  Section~\ref{sect:wtmi} shows that weak termination in finite games is equivalent to almost sure convergence of a memoryless Markov process to an NE. Further, \cite{FJS10} proved that existence of a unique NE in every subgame implies weak acyclicity; also about subgames, Section~\ref{sect:wtmi} shows that weak termination of the maximising improvement is preserved when (recursively) removing a well-chosen strategy.

\section{Forbidden patterns and terminating improvement in 2-player games}\label{sect:cife}

Definition~\ref{def:el-pat} below defines eight generic patterns that are meant to match or not to match the patterns of a given game. Note that these patterns do not constitute a comprehensive account of the patterns that may occur in an abstract two-player game in normal form, but they are the ones that matter as far as this article is concerned. Also, given a preference $\prec_a$, the notation $x \preceq_a y$ stands for $\neg(y \prec_a x)$ and the notation $x \sim_a y$ for $x \preceq_a y \,\wedge\, y\preceq_a x$.

\begin{definition}[Forbidden patterns]\label{def:el-pat}
Given two players $a$ and $b$ and their preferences $\prec_a$ and $\prec_b$, the patterns below are defined up to permutation of the strategies and the players. In the remainder of this article, these patterns may be referred to by their full names or by the initials of the names only. The first two patterns (EC and PUB) are called the strongly forbidden patterns, and the third pattern (PUC) is called the weakly forbidden pattern.

\bigskip

\begin{tabular}{cccc}
Elementary cycle & Pareto-useful bypass & Pareto-useless confluence & Pareto-harmless confluence\\
\begin{tikzpicture}[node distance=1.5cm]
  \node(x){};
  \node(y)[below of = x]{};
  \node(z)[right of = y]{};
  \node(t)[right of = x]{};

  \draw [->>] (x) to node[left] {a} (y);
  \draw [->>] (y) to node[below] {b} (z);
  \draw [->>] (t) to node[above] {b} (x);
  \draw [->>] (z) to node[right] {a} (t);
  \end{tikzpicture}
&
\begin{tikzpicture}[node distance=1.5cm]
  \node(x){};
  \node(y)[below of = x]{};
  \node(z)[right of = y]{};
  \node(t)[right of = x]{};

  \draw [->>] (x) to node[left] {a} (y);
  \draw [->>] (y) to node[below] {b} (z);
  \draw [->>] (z) to node[right] {a} (t);
  \draw [dashed] (x) to node[above] {$\preceq_b$} node[below] {$\prec_a$}(t);
  \end{tikzpicture}
&
\begin{tikzpicture}[node distance=1.5cm]
  \node(x){};
  \node(y)[below of = x]{};
  \node(z)[right of = y]{};
  \node(t)[right of = x]{};

  \draw [->>] (x) to node[left] {a} (y);
  \draw [->>] (y) to node[below] {b} (z);
  \draw [->>] (x) to node[above] {b} (t);
  \draw [->>] (t) to node[right] {a} (z);
  \draw [dashed] (x) to node[above, rotate=-45] {$\succ_b$} (z);
  \end{tikzpicture}
&
\begin{tikzpicture}[node distance=1.5cm]
  \node(x){};
  \node(y)[below of = x]{};
  \node(z)[right of = y]{};
  \node(t)[right of = x]{};

  \draw [->>] (x) to node[left] {a} (y);
  \draw [->>] (y) to node[below] {b} (z);
  \draw [->>] (x) to node[above] {b} (t);
  \draw [->>] (t) to node[right] {a} (z);
  \draw [dashed] (x) to node[above, rotate=-45] {$\preceq_b$} node[below, rotate=-45] {$\preceq_a$}(z);
  \end{tikzpicture}
\\
\begin{tikzpicture}[node distance=1.5cm]
  \node(x){};
  \node(y)[below of = x]{};
  \node(z)[right of = y]{};
  \node(t)[right of = x]{};

  \draw [->>] (x) to node[left] {a} (y);
  \draw [->>] (y) to node[below] {b} (z);
  \draw [dashed] (x) to node[above] {$\sim_b$} (t);
  \draw [dashed] (t) to node[below, rotate=90] {$\succeq_a$} (z);
  \end{tikzpicture}
&
\begin{tikzpicture}[node distance=1.5cm]
  \node(x){};
  \node(y)[below of = x]{};
  \node(z)[right of = y]{};
  \node(t)[right of = x]{};

  \draw [->>] (x) to node[left] {a} (y);
  \draw [->>] (y) to node[below] {b} (z);
  \draw [dashed] (x) to node[above] {$\preceq_b$} (t);
  \draw [dashed] (t) to node[below, rotate=90] {$\sim_a$} (z);
  \end{tikzpicture}
&
\begin{tikzpicture}[node distance=1.5cm]
  \node(x){};
  \node(y)[below of = x]{};
  \node(z)[right of = y]{};
  \node(t)[right of = x]{};

  \draw [->>] (x) to node[left] {a} (y);
  \draw [->>] (y) to node[below] {b} (z);
  \draw [dashed] (x) to node[above] {$\preceq_b$} node [below] {$\succeq_a$} (t);
  \draw [dashed] (t) -- (z);
  \end{tikzpicture}
&
\begin{tikzpicture}[node distance=1.5cm]
  \node(x){};
  \node(y)[below of = x]{};
  \node(z)[right of = y]{};
  \node(t)[right of = x]{};

  \draw [->>] (x) to node[left] {a} (y);
  \draw [->>] (y) to node[below] {b} (z);
  \draw [dashed] (t) to node[above, rotate=90] {$\preceq_b$} node[below, rotate=90] {$\succeq_a$} (z);
  \draw [dashed] (x) -- (t);
  \end{tikzpicture}
\\
Indifferent start & Indifferent arrival & Conflictual start & Conflictual arrival
\end{tabular}
\end{definition}

Observation~\ref{obs:comp-pat} below is a tool to make case disjunctions, and is proved in the appendix. 

\begin{observation}\label{obs:comp-pat}
Given a game with the two players $a$ and $b$, each rectangle that has a path of $\twoheadrightarrow$ of length at least two along the edges of the rectangle (like in the left-hand picture just below) actually matches (up to symmetry) one of the eight patterns from Definition~\ref{def:el-pat}.
\end{observation}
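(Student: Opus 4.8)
The plan is to fix, up to the symmetry granted in Definition~\ref{def:el-pat}, a canonical length-two path and then classify the two remaining edges of the rectangle. First I would record the elementary fact that consecutive edges of a rectangle belong to different players: the four edges form a $4$-cycle that alternates between $a$-edges (the two columns, where only $a$'s strategy changes) and $b$-edges (the two rows). Hence any $\twoheadrightarrow$-path of length two along the rectangle consists of one $a$-move followed by one $b$-move (or vice versa), and after permuting the players and the strategies I may assume it is $x\twoheadrightarrow_a y\twoheadrightarrow_b z$, with $x$ top-left, $y$ bottom-left, $z$ bottom-right and $t$ top-right. Since all eight generic patterns share exactly these two solid edges $x\twoheadrightarrow_a y$ and $y\twoheadrightarrow_b z$, it remains to show that the two residual edges --- the top $b$-edge $x$--$t$ and the right $a$-edge $t$--$z$ --- together with the preferences they induce, always instantiate one of the eight configurations.

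The classification I would run rests on a trichotomy of each mover's preference on its residual edge ($z\prec_a t$, $t\prec_a z$ or $z\sim_a t$ on $t$--$z$; $x\prec_b t$, $t\prec_b x$ or $x\sim_b t$ on $x$--$t$), but the branches are conceptual rather than arithmetical. If both residual edges close the cycle ($z\prec_a t$ and $t\prec_b x$), the rectangle is an elementary cycle EC. If the alternate route $x\twoheadrightarrow_b t\twoheadrightarrow_a z$ also reaches $z$ (i.e. $x\prec_b t$ and $t\prec_a z$), the rectangle is a confluence; here I would compare the start $x$ with the confluence point $z$, getting PHC when $z$ weakly dominates $x$ for both players and PUC otherwise, after using the player-swap symmetry to name $b$ the player who strictly prefers $x$ to $z$ --- this is legitimate because swapping $a$ and $b$ exchanges the intermediate corners $y$ and $t$, hence the two confluent paths. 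If instead the right edge prolongs the path into $x\twoheadrightarrow_a y\twoheadrightarrow_b z\twoheadrightarrow_a t$ while the top edge is not cycle-closing, the rectangle is a bypass, and comparing $a$'s values of $x$ and $t$ yields PUB when $x\prec_a t$ and the conflictual start CS when $x\succeq_a t$. The remaining families appear when a mover is indifferent or the residual preferences conflict without forming a cycle, confluence or bypass: an indifferent $b$ on the top edge gives IS, an indifferent $a$ on the right edge gives IA, and the leftover conflicts give CS or the conflictual arrival CA.

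The step I expect to be the main obstacle is completeness: checking that the trichotomy-squared, once refined by the non-mover preferences on the two residual edges and by the $x$--$z$ diagonal, leaves no configuration unmatched, and that every $\preceq$ or $\sim$ label is read as the exact negation/conjunction fixed in Definition~\ref{def:el-pat} (e.g. $x\succeq_a t$ abbreviating $\neg(x\prec_a t)$), so that the refinements are themselves exhaustive. The delicate configurations are those in which the fourth corner $t$ is a source or a sink of $\twoheadrightarrow$ --- for instance $t\prec_a z$ together with $t\prec_b x$ --- where no pattern matches along $x\twoheadrightarrow_a y\twoheadrightarrow_b z$ and the pattern hiding in the rectangle is visible only along the \emph{other} length-two path; there the symmetry argument must be invoked a second time to bring that path into canonical position before the matching with PUB, CS or CA becomes apparent. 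Organising these appeals to symmetry so that every case is covered exactly once is the real work of the proof.
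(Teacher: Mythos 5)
Your plan coincides with the paper's own argument: the appendix proves Observation~\ref{obs:comp-pat} by exactly the nested case disjunction you describe, fixing the canonical path $x\twoheadrightarrow_a y\twoheadrightarrow_b z$ up to permutation of players and strategies and then branching on the preferences along the two residual edges (closing the cycle gives EC, extending the path gives PUB/CS according to $a$'s comparison of $x$ and $t$, the confluence gives PHC or, after comparing $x$ with $z$ and swapping players if needed, PUC, and the remaining branches give IS, IA, CA, or a second occurrence of PUB along the other length-two path), all displayed as a tree of diagrams whose leaves are the eight patterns. The one point to watch --- which you flag yourself --- is that since the preferences are arbitrary binary relations, the splits must be genuine dichotomies of the form ``$P$ or $\neg P$'' with $\preceq$ and $\sim$ read as the negations and conjunctions of Definition~\ref{def:el-pat}, not a trichotomy $\prec/\succ/\sim$, which would miss relations holding in both directions.
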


\begin{tabular}{c@{\hspace{2cm}}c}
\begin{tikzpicture}[node distance=1.5cm]
  \node(x){};
  \node(y)[below of = x]{};
  \node(z)[right of = y]{};
  \node(t)[right of = x]{};

  \draw [dashed] (x) -- (t);
  \draw [dashed] (z) -- (t);
  \draw [->>] (x) to node[left] {a} (y);
  \draw [->>] (y) to node[below] {b} (z);
  \end{tikzpicture}

&
\begin{tikzpicture}[node distance=1cm]
  \node(y0){};
  \node(x1)[below of = y0]{};
  \node(y1)[right of = x1]{};
  \node(x2)[below of = y1]{};
  \node(y2)[right of = x2]{};
  \node(x3)[below of = y2]{};
  \node(y3)[right of = x3]{};
  \node(z0)[right of = y0]{};
  \node(z1)[below of = x1]{};
  \node(z2)[right of = y1]{};
  \node(z3)[below of = x2]{};
  \node(z4)[right of = y2]{};

  \fill (z0) circle (2pt);
  \fill (z1) circle (2pt);
  \fill (z2) circle (2pt);
  \fill (z3) circle (2pt);
  \fill (z4) circle (2pt);

  \draw [->>] (y0) to node[left] {c} (x1);
  \draw [->>] (x1) to node[above] {c} (y1);
  \draw [->>] (y1) to node[left] {c} (x2);
  \draw [->>] (x2) to node[above] {c} (y2);
  \draw [->>] (y2) to node[left] {c} (x3);
  \draw [->>] (x3) to node[above] {c} (y3);
  \end{tikzpicture}
\end{tabular}

Definition~\ref{def:path-sheath} below introduces the notion of a sheath of a (convertibility) path. It is not yet clear how useful it is for Lemma~\ref{lem:fp-decrease}, but it is definitely useful for Lemma~\ref{lem:fp-max-impr} and Theorem~\ref{thm:sfp-ne}. The right-hand picture above illustrates Definition~\ref{def:path-sheath}: a sheath made of a six-step path plus five generated bullet points. (The letter $c$ stands for convertibility.)

\begin{definition}[Sheath of a convertibility path]\label{def:path-sheath}
Let $s_1\stackrel{c}{\twoheadrightarrow} s_2\stackrel{c}{\twoheadrightarrow}\dots\stackrel{c}{\twoheadrightarrow} s_n$ be a (player-alternating) convertibility path in a two-player game $\langle\{a,b\},S_a,S_b,O,v,\prec_a,\prec_b\rangle$, and let us define the \emph{sheath} of the path as the set of profiles $\{s_1,\dots,s_n\} \cup \{(s^a_i,s^b_{i+2})\,\mid\,1\leq i\leq n-2\,\wedge\,s_i\twoheadrightarrow_as_{i+1}\twoheadrightarrow_b s_{i+2}\}\cup\{(s^a_{i+2},s^b_{i})\,\mid\,1\leq i\leq n-2\,\wedge\,s_i\twoheadrightarrow_bs_{i+1}\twoheadrightarrow_a s_{i+2}\}$.
\end{definition}

Lemma~\ref{lem:fp-decrease} below means that if a path of collective improvement is (locally) minimal in length among all the paths from the same starting profile to the same target profile, some measure is decreasing along the path. Recall that $x \preceq_a y$ stands for $\neg(y \prec_a x)$.

\begin{lemma}\label{lem:fp-decrease}
Let be a game with the two players $a$ and $b$, and let us assume that the preferences satisfy the following for $P \in \{a,b\}$.
\[\begin{array}{l@{\hspace{1cm}}r}
\forall x,y,z,\quad x\prec_P y\,\wedge\,y\prec_P z\,\Rightarrow\, x\prec_P z & \mbox{(transitivity)}\\
\forall x,y,z,\quad x\prec_P y\,\wedge\,y\preceq_P z\,\Rightarrow\, x\prec_P z & \mbox{(pseudo-transitivity)}\\
\forall x,y,z,\quad x\preceq_P y\,\wedge\,y\prec_P z\,\Rightarrow\, x\prec_P z & \mbox{(pseudo-transitivity)}
\end{array}\]
\noindent Let $s_0\twoheadrightarrow_b s_1\twoheadrightarrow_a s_2 \twoheadrightarrow_b s_3\twoheadrightarrow_a s_4\twoheadrightarrow_b s_5$ be a path where the $s_i$ are pairwise disjoint, and let us assume that there is (within the sheath of the path) no shorter path from $s_0$ to $s_5$ that involves the two players. Also assume that no forbidden pattern occurs within the sheath. Then $v(s_3)\prec_a v(s_4)\preceq_a v(s_1)$. 
\end{lemma}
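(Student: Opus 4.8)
The first inequality is immediate: $s_3\twoheadrightarrow_a s_4$ unfolds, via $\twoheadrightarrow_a\,=\,\prec_a\cap\stackrel{c}{\twoheadrightarrow}_a$, to $v(s_3)\prec_a v(s_4)$. So the whole task is to prove $v(s_4)\preceq_a v(s_1)$, and my plan is to do this by contradiction, assuming $v(s_1)\prec_a v(s_4)$.

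First I would fix coordinates and isolate the relevant part of the sheath. The alternation forces $s_1,s_2$ to share their $b$-strategy, $s_2,s_3$ their $a$-strategy, and $s_3,s_4$ their $b$-strategy; hence the middle sub-path $s_1\twoheadrightarrow_a s_2\twoheadrightarrow_b s_3\twoheadrightarrow_a s_4$, together with the two profiles $p:=(s_1^a,s_3^b)$ and $u:=(s_4^a,s_1^b)$ generated by its two triples in Definition~\ref{def:path-sheath}, forms a full $3\times 2$ block of profiles inside the sheath. In this block the two stacked rectangles $R_1:=\{s_1,p,s_2,s_3\}$ and $R_2:=\{s_2,s_3,u,s_4\}$ each carry a two-edge $\twoheadrightarrow$-path (namely $s_1\twoheadrightarrow_a s_2\twoheadrightarrow_b s_3$ and $s_2\twoheadrightarrow_b s_3\twoheadrightarrow_a s_4$), so by Observation~\ref{obs:comp-pat} each matches one of the eight patterns of Definition~\ref{def:el-pat}; since EC, PUB and PUC are assumed absent from the sheath, each matches one of the five remaining patterns (PHC, the two indifferent, the two conflictual ones). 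These matchings pin down, up to a few alternatives, the directions and strictness of the non-path edges $s_1$--$p$, $p$--$s_3$, $s_2$--$u$, $u$--$s_4$, and tie the outcomes at $p$ and $u$ to those along the path.

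Next I would read off the consequences of minimality. Any two-step improving detour $s_1\twoheadrightarrow_a u\twoheadrightarrow_b s_4$ or $s_1\twoheadrightarrow_b p\twoheadrightarrow_a s_4$ (both running through sheath profiles) would splice into $s_0\twoheadrightarrow_b s_1\,\cdots\,s_4\twoheadrightarrow_b s_5$ to give a length-$4$ two-player improvement path from $s_0$ to $s_5$ inside the sheath, contradicting the hypothesis; and $p,u$ are the only sheath profiles $a$/$b$-adjacent to both $s_1$ and $s_4$. Minimality therefore yields $\neg\bigl(v(s_1)\prec_a v(u)\wedge v(u)\prec_b v(s_4)\bigr)$ and $\neg\bigl(v(s_1)\prec_b v(p)\wedge v(p)\prec_a v(s_4)\bigr)$.

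To conclude I would split on the two $a$-conversion comparisons $v(s_1)$ versus $v(u)$ (in the shared $b$-strategy $s_1^b$) and $v(p)$ versus $v(s_4)$ (in the shared $b$-strategy $s_3^b$). In the ``ascending'' cases, e.g.\ $v(s_1)\prec_a v(u)$, the matching minimality clause forces the neighbouring $b$-relation (here $v(s_4)\preceq_b v(u)$), which fed into the pattern of $R_2$ should produce a forbidden EC, PUB or PUC in the sheath, contradicting the standing hypothesis; symmetrically for $v(p)\prec_a v(s_4)$ through $R_1$. In the complementary ``descending'' cases ($v(u)\preceq_a v(s_1)$ and $v(s_4)\preceq_a v(p)$), the pattern conclusions on $R_1$ and $R_2$ should deliver a chain $v(s_4)\preceq_a v(p)\preceq_a v(s_1)$ (or its $u$-analogue), and transitivity of $\preceq_a$ --- itself a consequence of the assumed transitivity and pseudo-transitivities --- closes it against $v(s_1)\prec_a v(s_4)$. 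I expect the genuine obstacle to be the boundary cases matching a conflictual or indifferent pattern, where the decisive preference is only weak ($\preceq$ or $\sim$) and points the ``wrong'' way: there neither a single inner rectangle nor the two short detours suffice, and one must additionally exploit the outer rectangle $\{s_1,p,u,s_4\}$ and, if necessary, longer reroutes through the remaining sheath profiles $q:=(s_2^a,s_0^b)$ and $r:=(s_3^a,s_5^b)$, threading strict and weak preferences through the pseudo-transitivity hypotheses to force either a forbidden pattern or a strictly shorter two-player path.
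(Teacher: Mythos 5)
Your skeleton is the paper's: split the middle of the path into the two rectangles $R_1=\{s_1,p,s_2,s_3\}$ and $R_2=\{s_2,s_3,u,s_4\}$, apply Observation~\ref{obs:comp-pat}, use minimality of the path to kill the PHC/IS/IA matches, and handle the conflictual matches with (pseudo-)transitivity, resorting in the worst case to the outer rectangle $\{s_1,p,u,s_4\}$. You have even correctly guessed that the outer rectangle must be shown to be a PUC. But the proof is not actually there: the entire content of the lemma lives in the conflictual cases, and you explicitly defer them (``should produce'', ``I expect the genuine obstacle to be\dots''). In the paper these cases are a nested disjunction --- first on the pattern of $R_1$ (CS closes directly: $s_0\twoheadrightarrow_b p$ by pseudo-transitivity, minimality forces $v(s_4)\preceq_a v(p)$, and the CS inequality $v(p)\preceq_a v(s_1)$ closes the chain), then, inside the CA case for $R_1$, on the pattern of $R_2$ (CA closes via the analogous chain through $u$; CS requires deriving $s_1\twoheadrightarrow u$ and $p\twoheadrightarrow s_4$ by pseudo-transitivity, extracting two weak inequalities against $s_0$ and $s_5$ from minimality, and chaining twice more to conclude that $u$ is strictly $\prec_b$-below $p$, which makes $\{s_1,p,u,s_4\}$ a PUC). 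None of these chains is exhibited in your text, and it is not a priori obvious they close: each needs a specific minimality invocation (against a length-$3$ detour through $s_0$ or $s_5$, not only the length-$4$ detours you list) interleaved with pseudo-transitivity in the right order.

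Two of your concrete claims are also off. First, in your ``ascending'' case $v(s_1)\prec_a v(u)$ the data you collect ($v(s_4)\preceq_b v(u)$ plus a CA match of $R_2$) do not produce a forbidden pattern; when $R_2$ is CA one has $u\twoheadrightarrow_b s_5$, so minimality directly forbids $s_1\twoheadrightarrow_a u$, i.e.\ the ascending case is vacuous there --- the forbidden PUC only ever appears in the doubly-conflictual configuration, and on the outer rectangle, not on $R_1$ or $R_2$. Second, your descending-case chain $v(s_4)\preceq_a v(p)\preceq_a v(s_1)$ needs transitivity of $\preceq_a$, which the paper proves as a preliminary step from pseudo-transitivity; you use it without establishing it. So the plan is sound and well-aimed, but as written it is a roadmap with the hard kilometres missing.
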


\begin{proof}
Let us first prove that $\preceq_P$ is transitive for $P \in \{a,b\}$. Let $x \preceq_P y \preceq_P z$. If $z \prec_P x$, then pseudo-transitivity and $y\preceq_P z$ implies $y \prec_P x$, contradiction. The path looks like the first picture below up to row and column permutation (and repetition) of the game matrix. By Observation~\ref{obs:comp-pat} and the assumption that there is no forbidden pattern within the sheath, the rectangle that is determined by the profiles $s_1$, $s_2$, and $s_3$ may match five different patterns. By (pseudo-)transitivity of the preferences, the first three patterns (PHC, IS, and IA) contradict minimality of the path, as shown by the additional improvement arrows in the second, third, and fourth pictures below. If it matches the CS pattern, as in the fifth picture below, the longer improvement arrow starting at $s_0$ holds by pseudo-transitivity; the longer dashed inequality holds not to contradict minimality of the path; and preference (pseudo-)transitivity then yields the required inequalities. For the CA pattern let us make a case disjunction on the rectangle determined by $s_2$, $s_3$ and $s_4$. If it matches PHC, IS or IA, it contradicts minimality of the path, just as before, so only the PHC-case is displayed. If it matches CA, the longer improvement arrow holds by pseudo-transitivity; the longer dashed inequality holds not to contradict minimality of the path; and preference (pseudo-)transitivity then yields the required inequalities. Finally let us assume that it matches CS. The two longer improvement arrows holds by pseudo-transitivity; because of them the two longer dashed inequalities follow, not to contradict minimality of the path; and then because of these inequalities the two shorter additional improvement arrows follow, again by pseudo-transitivity. By pseudo-transitivity (invoked twice), the lower bullet point is strictly worse than the upper bullet point according to player $b$. This makes the rectangle induced by these bullet points (or equivalently by $s_1$ and $s_4$) a forbidden pattern (PUC).

\begin{tabular}{ccc}
\begin{tikzpicture}[node distance=1.5cm]
  \node(x){$s_0$};
  \node(y)[right of = x]{$s_1$};
  \node(z)[below of = y]{$s_2$};
  \node(t)[right of = z]{$s_3$};
  \node(u)[below of = t]{$s_4$};
  \node(v)[right of = u]{$s_5$};
  \node(r)[right of = y]{};
  \node(s)[below of = z]{};

  \draw [->>] (x) to node[above] {b} (y);
  \draw [->>] (y) to node[left] {a} (z);
  \draw [->>] (z) to node[above] {b} (t);
  \draw [->>] (t) to node[right] {a} (u);
  \draw [->>] (u) to node[above] {b} (v);
  \end{tikzpicture}
&
\begin{tikzpicture}[node distance=1.5cm]
  \node(x){};
  \node(y)[right of = x]{};
  \node(z)[below of = y]{};
  \node(t)[right of = z]{};
  \node(u)[below of = t]{};
  \node(v)[right of = u]{};
  \node(r)[right of = y]{};
  \node(s)[below of = z]{};

  \draw [->>] (x) -- (y);
  \draw [->>] (y) -- (z);
  \draw [->>] (z) -- (t);
  \draw [->>] (t) -- (u);
  \draw [->>] (u) -- (v);

  \draw [->>] (y) to (r);
  \draw [->>] (r) to (t);
  \draw [->>] (x) to [bend left=30] (r);
  \draw [->>] (r) to [bend left=30] (u);

  \fill (r) circle (2pt);
  \end{tikzpicture}
&
\begin{tikzpicture}[node distance=1.5cm]
  \node(x){};
  \node(y)[right of = x]{};
  \node(z)[below of = y]{};
  \node(t)[right of = z]{};
  \node(u)[below of = t]{};
  \node(v)[right of = u]{};
  \node(r)[right of = y]{};
  \node(s)[below of = z]{};

  \draw [->>] (x) -- (y);
  \draw [->>] (y) -- (z);
  \draw [->>] (z) -- (t);
  \draw [->>] (t) -- (u);
  \draw [->>] (u) -- (v);

  \draw [dashed] (y) to node[below] {$\sim_b$} (r);
  \draw [dashed] (r) to node[above, rotate=90] {$\succeq_a$} (t);
  \draw [->>] (x) to [bend left=30] (r);
  \draw [->>] (r) to [bend left=30] (u);

  \fill (r) circle (2pt);
  \end{tikzpicture}
\end{tabular}
\begin{tabular}{ccc}
\begin{tikzpicture}[node distance=1.5cm]
  \node(x){};
  \node(y)[right of = x]{};
  \node(z)[below of = y]{};
  \node(t)[right of = z]{};
  \node(u)[below of = t]{};
  \node(v)[right of = u]{};
  \node(r)[right of = y]{};
  \node(s)[below of = z]{};

  \draw [->>] (x) -- (y);
  \draw [->>] (y) -- (z);
  \draw [->>] (z) -- (t);
  \draw [->>] (t) -- (u);
  \draw [->>] (u) -- (v);

  \draw [dashed] (y) to node[below] {$\preceq_b$} (r);
  \draw [dashed] (r) to node[above, rotate=90] {$\sim_a$} (t);
  \draw [->>] (x) to [bend left=30] (r);
  \draw [->>] (r) to [bend left=30] (u);

  \fill (r) circle (2pt);
  \end{tikzpicture}
&
\begin{tikzpicture}[node distance=1.5cm]
  \node(x){};
  \node(y)[right of = x]{};
  \node(z)[below of = y]{};
  \node(t)[right of = z]{};
  \node(u)[below of = t]{};
  \node(v)[right of = u]{};
  \node(r)[right of = y]{};
  \node(s)[below of = z]{};

  \draw [->>] (x) -- (y);
  \draw [->>] (y) -- (z);
  \draw [->>] (z) -- (t);
  \draw [->>] (t) -- (u);
  \draw [->>] (u) -- (v);

  \draw [dashed] (y) to node[below] {$\preceq_b$, $\succeq_a$} (r);
  \draw [->>] (x) to [bend left=30] (r);
  \draw [dashed] (r) to [bend left=30] node[below, rotate=90] {$\preceq_a$} (u);

  \fill (r) circle (2pt);
  \end{tikzpicture}
&
\begin{tikzpicture}[node distance=1.5cm]
  \node(x){};
  \node(y)[right of = x]{};
  \node(z)[below of = y]{};
  \node(t)[right of = z]{};
  \node(u)[below of = t]{};
  \node(v)[right of = u]{};
  \node(r)[right of = y]{};
  \node(s)[below of = z]{};

  \draw [->>] (x) -- (y);
  \draw [->>] (y) -- (z);
  \draw [->>] (z) -- (t);
  \draw [->>] (t) -- (u);
  \draw [->>] (u) -- (v);

  \draw [dashed] (r) to node[above, rotate=90] {$\preceq_b$} node [below, rotate=90] {$\succeq_a$} (t);
  \draw [->>] (s) to (u);
  \draw [->>] (z) to (s);
  \draw [->>] (s) to [bend right=30] (v);
  \draw [->>] (y) to [bend right=30] (s);

  \fill (r) circle (2pt);
  \fill (s) circle (2pt);
  \end{tikzpicture}
\end{tabular}
\begin{tabular}{ccc}
\begin{tikzpicture}[node distance=1.5cm]
  \node(x){};
  \node(y)[right of = x]{};
  \node(z)[below of = y]{};
  \node(t)[right of = z]{};
  \node(u)[below of = t]{};
  \node(v)[right of = u]{};
  \node(r)[right of = y]{};
  \node(s)[below of = z]{};

  \draw [->>] (x) -- (y);
  \draw [->>] (y) -- (z);
  \draw [->>] (z) -- (t);
  \draw [->>] (t) -- (u);
  \draw [->>] (u) -- (v);

  \draw [dashed] (r) to node[above, rotate=90] {$\preceq_b$} node [below, rotate=90] {$\succeq_a$} (t);
  \draw [dashed] (s) to node[above] {$\preceq_b$, $\succeq_a$} (u);
  \draw [->>] (s) to [bend right=30] (v);
  \draw [dashed] (s) to [bend left=30] node[above, rotate=90] {$\preceq_a$} (y);

  \fill (r) circle (2pt);
  \fill (s) circle (2pt);
  \end{tikzpicture}
&
\begin{tikzpicture}[node distance=1.5cm]
  \node(x){};
  \node(y)[right of = x]{};
  \node(z)[below of = y]{};
  \node(t)[right of = z]{};
  \node(u)[below of = t]{};
  \node(v)[right of = u]{};
  \node(r)[right of = y]{};
  \node(s)[below of = z]{};

  \draw [->>] (x) -- (y);
  \draw [->>] (y) -- (z);
  \draw [->>] (z) -- (t);
  \draw [->>] (t) -- (u);
  \draw [->>] (u) -- (v);

  \draw [dashed] (r) to node[above, rotate=90] {$\preceq_b$, $\succeq_a$} (t);
  \draw [->>] (r) to [bend left=30] (u); 
  \draw [dashed] (r) to [bend right=30] node[above] {$\succeq_b$} (x);
  \draw [->>] (r) -- (y);
  \draw [dashed] (z) to node[below, rotate=90] {$\preceq_b$, $\succeq_a$} (s);
  \draw [->>] (y) to [bend right=30] (s);
  \draw [dashed] (s) to [bend right=30] node[below] {$\succeq_b$} (v);
  \draw [->>] (u) -- (s);
 
  \fill (r) circle (2pt);
  \fill (s) circle (2pt);
  \end{tikzpicture}
\end{tabular}
\end{proof}

Theorem~\ref{thm:fp-acycl} below relies on Lemma~\ref{lem:fp-decrease} (but does not refer to sheaths). It implies Corollary~\ref{cor:fp-term} since a strict weak order is a strict order with transitive negation, hence (pseudo-)transitivity.

\begin{theorem}\label{thm:fp-acycl}
If a finite two-player game with transitive and pseudo-transitive preferences does not contain any forbidden pattern, every player performing infinitely many steps in an improvement sequence has a preference with a cycle.
\end{theorem}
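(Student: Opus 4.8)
The plan is to prove the contrapositive for one fixed player, say $a$: assuming $\prec_a$ is acyclic, I show that $a$ can make only finitely many moves in any improvement sequence (the case of $b$ being symmetric). The fact I will ultimately contradict is elementary: since the outcome set $v(S)$ is finite, any infinite $\prec_a$-descending chain must repeat an outcome and thereby close a $\prec_a$-cycle. First I would normalise the sequence by using transitivity to collapse every maximal run of consecutive moves by the same player into a single move of that player; this yields a strictly player-alternating improvement sequence in the same (pattern-free) game, in which $a$ still moves infinitely often. If $b$ moves only finitely often, then on a tail $b$'s strategy is fixed and $a$'s outcomes strictly $\prec_a$-increase by transitivity, so finiteness forces a repeated profile and hence a $\prec_a$-cycle, a contradiction. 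I may therefore assume both players move infinitely often along an eventually strictly alternating sequence.

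The heart of the argument is to extract a monovariant for $a$ from Lemma~\ref{lem:fp-decrease}. For two consecutive $a$-moves I look at the six-profile window $s_0\twoheadrightarrow_b s_1\twoheadrightarrow_a s_2\twoheadrightarrow_b s_3\twoheadrightarrow_a s_4\twoheadrightarrow_b s_5$. Once this window meets the lemma's hypotheses, the lemma gives $v(s_3)\prec_a v(s_4)\preceq_a v(s_1)$, which together with the first improvement step yields $v(s_3)\prec_a v(s_4)\preceq_a v(s_1)\prec_a v(s_2)$, whence $v(s_3)\prec_a v(s_1)$ by pseudo-transitivity. Thus $a$'s successive \emph{departure} outcomes strictly $\prec_a$-decrease from one $a$-move to the next. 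Running this over the infinitely many consecutive $a$-windows produces an infinite $\prec_a$-descending chain on the finite set $v(S)$, which must repeat an outcome and close a $\prec_a$-cycle, contradicting acyclicity. The symmetric statement for $b$ then gives the conclusion for every player that moves infinitely often.

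The main obstacle is discharging the hypotheses of Lemma~\ref{lem:fp-decrease} for these windows. The no-forbidden-pattern assumption is global, hence holds inside every sheath, so that clause is free. For the pairwise-distinctness of $s_0,\dots,s_5$ along a strictly alternating path: two profiles at distance $1$ or $2$ differ in a changed coordinate, at distance $3$ they differ in the coordinate moved an odd number of times, and an equality at distance $4$ would be an elementary cycle, which is forbidden; so the only coincidence a window can exhibit is $s_0=s_5$. I must therefore either rule out such a closing window or treat it as a short closed improvement loop handled by the same finiteness bookkeeping. The genuinely delicate point is the local-minimality clause of the lemma — that there be no shorter two-player path from $s_0$ to $s_5$ within the sheath — which is not automatic along an arbitrary run. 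I expect to secure it by replacing the relevant segment with a shortest alternating improvement detour inside the sheath before invoking the lemma, so that minimality holds by construction while the endpoints' $a$-outcomes are preserved; reconciling this replacement with the global descending-chain bookkeeping is the crux of the proof.
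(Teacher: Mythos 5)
You have correctly identified the crux — discharging the local-minimality hypothesis of Lemma~\ref{lem:fp-decrease} for the six-profile windows — but your proposal leaves it genuinely unresolved, and the repair you sketch does not work. Splicing a shortest detour into a window changes the \emph{interior} profiles of that window, and the quantities your monovariant tracks are precisely the interior outcomes $v(s_1)$ and $v(s_3)$ (the profiles from which $a$ departs), not the endpoints; so the replacement destroys the very descending chain you are trying to build. Worse, consecutive windows overlap in four profiles, so local repairs cannot be performed independently: fixing minimality for the $k$-th pair of $a$-moves perturbs the window for the $(k{+}1)$-th pair. Your own closing sentence concedes that reconciling the replacement with the bookkeeping is ``the crux,'' which is exactly the part that is missing. (The $s_0=s_5$ coincidence you flag is a lesser issue, but ``handled by the same finiteness bookkeeping'' is also not an argument.)

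The paper's resolution is a different extraction step that you should adopt: instead of running Lemma~\ref{lem:fp-decrease} along the raw infinite sequence, first use finiteness of the game to extract from it an improvement \emph{cycle} involving player $a$, and take such a cycle of minimal length. Lengths $\leq 3$ give a preference cycle for $a$ outright, length $4$ is excluded by the absence of the EC pattern, and minimality plus transitivity forces strict alternation, so the cycle has even length at least six. The point is that global minimality of the cycle discharges the local-minimality hypothesis of the lemma in \emph{every} window at once — a shorter two-player path inside any sheath would splice in to give a shorter cycle still involving $a$ — so no detour surgery is needed. Applying the lemma around the cycle then yields $v(s_1)\succ_a v(s_3)\succ_a\cdots$ wrapping back on itself, i.e.\ a $\prec_a$-cycle. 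Your window decomposition and the inequality $v(s_3)\prec_a v(s_4)\preceq_a v(s_1)$ are the right ingredients; the missing idea is to make the minimality global by working on a shortest cycle rather than trying to enforce it locally.
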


\begin{proof}
Let us assume that Player $a$ performs infinitely many steps in some improvement sequence. Since the game is finite, a cycle involving Player $a$ can be extracted from this sequence. Let us consider such a cycle with minimal length: if its length is three or less, it involves Player $a$ only, so she has a preference with a cycle; the length cannot be four either, because there is no forbidden pattern in the game. By minimality of the cycle and transitivity of the preferences, the two players must alternate their individual-improvement steps, so that the length of the cycle is even, so it must be at least six. Lemma~\ref{lem:fp-decrease} then implies that there is, according to Player $a$, an infinite descending sequence of outcomes, so by finiteness of the game there must be a cycle.
\end{proof}

\begin{corollary}\label{cor:fp-term}
If a finite two-player game with strict-weak-order preferences does not contain any forbidden pattern, the collective improvement terminates.
\end{corollary}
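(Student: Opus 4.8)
The plan is to derive the corollary from Theorem~\ref{thm:fp-acycl} by checking that strict-weak-order preferences fall under its hypotheses and then ruling out the theorem's conclusion. First I would recall the defining property of a strict weak order $\prec_P$: it is a strict (irreflexive, transitive) order whose negation $\preceq_P$ (recall $x \preceq_P y := \neg(y \prec_P x)$) is transitive, equivalently whose indifference/incomparability relation is transitive. From this I would verify the three hypotheses of Theorem~\ref{thm:fp-acycl}. Transitivity of $\prec_P$ is immediate. For the two pseudo-transitivity clauses, the cleanest argument represents $\prec_P$ by a totally-preordered utility $f_P$ with $x \prec_P y \Leftrightarrow f_P(x) < f_P(y)$ and hence $x \preceq_P y \Leftrightarrow f_P(x) \leq f_P(y)$; then $x \prec_P y \,\wedge\, y \preceq_P z$ gives $f_P(x) < f_P(y) \leq f_P(z)$, so $x \prec_P z$, and symmetrically for the other clause. (Alternatively, both clauses follow directly from transitivity of $\preceq_P$ together with the fact that $x \prec_P y$ implies $x \preceq_P y$.)

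Having placed the game under the hypotheses of Theorem~\ref{thm:fp-acycl}, the second step is to observe that a strict weak order, being in particular irreflexive and transitive, is acyclic: no $\prec_P$ admits a cycle. The theorem therefore forbids any player from performing infinitely many steps in any improvement sequence.

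The final step is a pigeonhole argument on the two players. If some improvement sequence were infinite, then since $A = \{a,b\}$ has only two elements, at least one player would take infinitely many individual-improvement steps, contradicting the previous step. Hence every improvement sequence is finite, \emph{i.e.} the collective improvement $\twoheadrightarrow$ terminates.

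I do not expect a genuine obstacle here, as the corollary is essentially a specialisation of Theorem~\ref{thm:fp-acycl}. The only point requiring a little care is the verification of pseudo-transitivity for strict weak orders; handling it via a utility representation (or directly via transitivity of $\preceq_P$) keeps it routine.
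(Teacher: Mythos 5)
Your proposal is correct and matches the paper's own (very terse) justification: the paper derives the corollary from Theorem~\ref{thm:fp-acycl} precisely by noting that a strict weak order is a strict order with transitive negation, hence transitive and pseudo-transitive, and hence acyclic. Your explicit verification of pseudo-transitivity and the two-player pigeonhole step simply spell out what the paper leaves implicit.
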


The following examples show that each of the three forbidden patterns alone can ruin the FIP, so it makes sense to forbid each of them. The first game below consists of an elementary cycle; the second game has one cycle $(a_1,b_1)\twoheadrightarrow(a_1,b_2)\twoheadrightarrow(a_2,b_2)\twoheadrightarrow(a_2,b_3)\twoheadrightarrow(a_3,b_3)\twoheadrightarrow(a_3,b_1)\twoheadrightarrow(a_1,b_1)$, and only one type of forbidden pattern (PUB) occurring actually six times; the third game has the same cycle, and only one forbidden pattern (PUB) occurring at $\{a_1,a_3\} \times \{b_1,b_3\}$; the fourth game has the same cycle, and only one forbidden pattern (PUC, where the payoffs of both players decrease) occurring at $\{a_1,a_3\} \times \{b_2,b_3\}$; and the fifth game has a longer, staircase-like cycle starting at $(a_1,b_1)$, and only one forbidden pattern (PUC, where only the payoff of Player $b$ decreases) occurring at $\{a_2,a_4\} \times \{b_3,b_4\}$.

\[\begin{array}{ccccc}
\begin{array}{|c|c|}
	  \cline{1-2}
 	    1,0 & 0,1\\
	  \cline{1-2}
	    0,1 & 1,0 \\
	  \cline{1-2}
	 \end{array}
&
\begin{array}{c|c|c|c|}
	\multicolumn{1}{c}{}&
	  \multicolumn{1}{c}{b_1}&
	  \multicolumn{1}{c}{b_2}&
	  \multicolumn{1}{c}{b_3}\\
	  \cline{2-4}
 	  a_1 &  1,0 & 0,1 & 0,0\\
	  \cline{2-4}
	   a_2 & 0,0 & 1,0 & 0,1 \\
	  \cline{2-4}
	   a_3 & 0,1 & 0,0 & 1,0 \\
	  \cline{2-4}
	 \end{array}
&
\begin{array}{c|c|c|c|}
	\multicolumn{1}{c}{}&
	  \multicolumn{1}{c}{b_1}&
	  \multicolumn{1}{c}{b_2}&
	  \multicolumn{1}{c}{b_3}\\
	  \cline{2-4}
  	  a_1 & 2,2 & 0,3 & 2,4\\
	  \cline{2-4}
	  a_2 &  2,2 & 1,1 & 0,2 \\
	  \cline{2-4}
 	  a_3 & 1,1 & 1,1 & 1,0 \\
	  \cline{2-4}
	 \end{array}
\\\\
\begin{array}{c|c|c|c|}
	\multicolumn{1}{c}{}&
	  \multicolumn{1}{c}{b_1}&
	  \multicolumn{1}{c}{b_2}&
	  \multicolumn{1}{c}{b_3}\\
	  \cline{2-4}
 	  a_1 & 2,2 & 0,3 & 2,2\\
	  \cline{2-4}
	  a_2 & 1,1 & 1,1 & 2,2 \\
	  \cline{2-4}
	  a_3 & 1,1 & 1,1 & 3,0 \\
	  \cline{2-4}
	 \end{array}
&
\begin{array}{c|c|c|c|c|}
	\multicolumn{1}{c}{}&
	  \multicolumn{1}{c}{b_1}&
	  \multicolumn{1}{c}{b_2}&
	  \multicolumn{1}{c}{b_3}&
	  \multicolumn{1}{c}{b_4}\\
	  \cline{2-5}
 	  a_1 &  2,0 & 0,1 & 0,0 & 2,0\\
	  \cline{2-5}
	   a_2 & 1,2 & 1,2 & 0,3 & 1,2\\
	  \cline{2-5}
	   a_3 & 1,2 & 1,2 & 1,1 & 2,2 \\
	  \cline{2-5}
	   a_4 & 1,2 & 1,2 & 1,1 & 3,0 \\
	  \cline{2-5}
	 \end{array}
&
\begin{array}{c|c|c|c|}
	\multicolumn{1}{c}{}&
	  \multicolumn{1}{c}{b_1}&
	  \multicolumn{1}{c}{b_2}&
	  \multicolumn{1}{c}{b_3}\\
	  \cline{2-4}
 	  a_1 &  x & y & z\\
	  \cline{2-4}
	   a_2 & z & x & y \\
	  \cline{2-4}
	   a_3 & y & z & x \\
	  \cline{2-4}
	 \end{array}
\end{array}\]

The last example above shows that the strict-weak-order condition in Corollary~\ref{cor:fp-term} is relevant: let $x$, $y$ and $z$ be three outcomes and assume that $y \prec_a x$ and $x \prec_b y$, while both players are indifferent to $z$. Then $\twoheadrightarrow$ has a cycle (similar to the three other $3\times 3$ games above), although it has no forbidden pattern since $z$ occurs in every $2 \times 2$ sub game. 

The two games above that involve a PUC pattern have Nash equilibria: $(a_3,b_2)$ for the fourth game, and $(a_3,b_2)$ and $(a_4,b_2)$ for the fifth game. It is actually always the case when allowing the weakly forbidden pattern PUC as shown by Theorem~\ref{thm:sfp-ne} after Lemma~\ref{lem:fp-max-impr} below, which is a useful variant of Lemma~\ref{lem:fp-decrease}. This new lemma refers to the \emph{maximising improvement}, which is the restriction of  $\twoheadrightarrow$ that forces the improving player to maximise the outcome.

\begin{lemma}\label{lem:fp-max-impr}
Let be a two-player game with strict-weak-order preferences. Let $s_0\twoheadrightarrow_b s_1\twoheadrightarrow_a s_2 \twoheadrightarrow_b s_3\twoheadrightarrow_a s_4$ be a path of maximising improvement, and let us assume that there is, within the sheath of the path, no shorter maximising-improvement path from $s_1$ to $s_4$. Also assume that no strongly forbidden pattern (EC or PUB) occurs within the sheath. Then $v(s_3)\prec_b v(s_1)$ or $(s_1^a,s_3^b)$ is an NE.
\end{lemma}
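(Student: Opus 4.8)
The plan is to mirror the structure of the proof of Lemma~\ref{lem:fp-decrease}, but adapted to the shorter four-step maximising-improvement path and to the weaker conclusion. First I would draw the path $s_0\twoheadrightarrow_b s_1\twoheadrightarrow_a s_2 \twoheadrightarrow_b s_3\twoheadrightarrow_a s_4$ as a staircase in the game matrix (up to row/column permutation), so that the relevant two-by-two rectangle is the one determined by $s_1$, $s_2$, $s_3$, whose remaining corner is the generated bullet point $(s_1^a,s_3^b)$ of the sheath. The goal splits into two cases according to what happens at that bullet point: either Player $b$ strictly prefers $v(s_1)$ to $v(s_3)$ (the first disjunct), or else I must argue that $(s_1^a,s_3^b)$ is already a Nash equilibrium.

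\medskip

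The key combinatorial step is to apply Observation~\ref{obs:comp-pat} to the rectangle through $s_1$, $s_2$, $s_3$: it has a $\twoheadrightarrow$-path of length two ($s_1\twoheadrightarrow_a s_2\twoheadrightarrow_b s_3$) along its edges, so it matches one of the eight generic patterns. The two strongly forbidden ones (EC and PUB) are excluded by hypothesis, so I would enumerate the surviving six possibilities. For each I would read off the preference of Player $b$ between $v(s_1)$ and $v(s_3^b)$-corner, i.e.\ between the start profile $s_1$ and the bullet point. The patterns that put the bullet point strictly below $s_1$ for $b$ (so that $v(s_3)\prec_b v(s_1)$, or that force the bullet point to dominate $s_3$ so the strict inequality transfers) immediately give the first disjunct by pseudo-transitivity, exactly as in Lemma~\ref{lem:fp-decrease}; here I would lean on strict-weak-orderness so that $\preceq_b$ is transitive and indifference composes correctly. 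The delicate patterns are those where $b$ is indifferent between $s_1$ and the bullet point (an IS- or PHC-like situation), because then the strict inequality need not hold and I am pushed toward the second disjunct.

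\medskip

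In exactly those delicate cases I would show that $(s_1^a,s_3^b)$ is a Nash equilibrium. The maximising-improvement hypothesis is what makes this possible: since $s_1\twoheadrightarrow_a s_2$ is a \emph{maximising} step, $s_2$ gives Player $a$ her best response against $b$'s current column, which by the rectangle's geometry pins down $a$'s ranking of the $s_1$-row versus the bullet-point row; and since $s_3\twoheadrightarrow_a s_4$ is maximising, $s_4$ is $a$'s best response against $b$'s column at $s_3$, which is the same column as the bullet point. Combining these with the indifference of $b$ at the bullet point, and invoking minimality of the path within the sheath (a shorter maximising-improvement path from $s_1$ to $s_4$ would otherwise exist), I would argue that neither player can strictly improve from $(s_1^a,s_3^b)$: Player $b$ cannot because of the indifference/transitivity, and Player $a$ cannot because the maximising step already located her optimum in that column. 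A forbidden-pattern argument (as in the CS/PUC endgame of Lemma~\ref{lem:fp-decrease}) may be needed to rule out a residual profitable deviation.

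\medskip

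The main obstacle I anticipate is the bookkeeping in the second case: turning the two \emph{maximising} constraints into a clean statement that $(s_1^a,s_3^b)$ is terminal for $\twoheadrightarrow$, while correctly handling the indifferences permitted by strict-weak orders (where $\sim_b$ is transitive but one must not silently upgrade $\preceq$ to $\prec$). I expect the pattern enumeration to be routine once Observation~\ref{obs:comp-pat} is invoked, so the real work is isolating which of the six patterns force the first disjunct and verifying that the complementary patterns genuinely yield an equilibrium rather than merely a non-strict inequality.
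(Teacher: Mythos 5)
Your skeleton matches the paper's: apply Observation~\ref{obs:comp-pat} to the rectangle spanned by $s_1,s_2,s_3$, discard EC and PUB by hypothesis, and resolve the six surviving patterns using the minimality of the path and the maximising property. But your assignment of patterns to conclusions is inverted, and the proof would fail if executed as written. The patterns in which Player $b$ is indifferent or Pareto-comparable at the fourth corner $t=(s_1^a,s_3^b)$ --- PHC, PUC, IS and IA --- are not where the NE disjunct comes from; they are \emph{impossible}: in each of them one reads off additional improvement arrows through $t$ that yield a shorter maximising-improvement path inside the sheath, contradicting minimality. In particular, in the IS case you propose to prove that $t$ is an NE, but IS only gives $s_3\preceq_a t$, which is entirely compatible with $t\prec_a s_4$, i.e.\ with Player $a$ having a profitable deviation from $t$ to $s_4$. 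Your justification that ``the maximising step already located her optimum in that column'' points the wrong way: $s_3\twoheadrightarrow_a s_4$ locates $a$'s optimum in the column of $t$ at $s_4$, not at $t$, so one must separately establish $s_4\preceq_a t$. (Likewise, the maximising step $s_1\twoheadrightarrow_a s_2$ concerns the column of $s_1$, which contains neither $t$ nor $s_3$ nor $s_4$, so it does not ``pin down $a$'s ranking'' of the bullet-point row.)

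The two cases that actually survive are the conflictual ones, which your dichotomy by ``strict versus indifferent $b$-preference at the bullet'' does not isolate. CS (where $s_1\preceq_b t$ and $t\preceq_a s_1$) is the case yielding the NE: Player $b$ cannot leave $t$ because $s_0\twoheadrightarrow_b s_1$ is maximising, so $s_1$ is $b$-maximal in its row and $s_1\preceq_b t$ forces $t$ to be $b$-maximal too; and Player $a$ cannot leave $t$ because the comparison between $t$ and $s_4$ must go the right way on pain of producing a shorter path --- the step the paper phrases as ``the longer dashed inequality holds not to contradict minimality.'' CA is the case yielding $v(s_3)\prec_b v(s_1)$, by combining $s_3\preceq_b t$ with $t\prec_b s_1$ (again extracted from minimality) and pseudo-transitivity. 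So the genuine content of the lemma sits precisely in the cases your classification misplaces; you would need to redo the six-way analysis with minimality doing the elimination of PHC/PUC/IS/IA and the conflict patterns CS and CA supplying, respectively, the equilibrium and the strict inequality.
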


\begin{proof}
Such a path is represented in the first picture below. Thanks to Observation~\ref{obs:comp-pat} and the assumption that there is no strongly forbidden pattern within the sheath, the rectangle that is determined by the profiles $s_1$, $s_2$, and $s_3$ may match six different patterns. Since the preferences are strict weak orders, the first three patterns (PHC and PUC together, IS, and IA) contradict minimality of the path, as shown by the additional improvement arrows in the second, third, and fourth pictures below. If it matches the CS pattern, as in the fifth picture below, the longer improvement arrow starting at $s_0$ holds by transitivity; the longer dashed inequality holds not to contradict minimality of the path; so the black spot $(s_1^a,s_3^b)$ is an NE by maximality of the improvement. For the CA pattern in the sixth picture, the short additional arrow starting at the black spot is here due to the minimality of the path, and $v(s_3)\prec_b v(s_1)$ holds by transitivity.

\begin{tabular}{ccc}
\begin{tikzpicture}[node distance=1.5cm]
  \node(x){$s_0$};
  \node(y)[right of = x]{$s_1$};
  \node(z)[below of = y]{$s_2$};
  \node(t)[right of = z]{$s_3$};
  \node(u)[below of = t]{$s_4$};
  \node(r)[right of = y]{};

  \draw [->>] (x) to node[above] {b} (y);
  \draw [->>] (y) to node[left] {a} (z);
  \draw [->>] (z) to node[above] {b} (t);
  \draw [->>] (t) to node[right] {a} (u);
  \end{tikzpicture}
&
\begin{tikzpicture}[node distance=1.5cm]
  \node(x){};
  \node(y)[right of = x]{};
  \node(z)[below of = y]{};
  \node(t)[right of = z]{};
  \node(u)[below of = t]{};
  \node(r)[right of = y]{};

  \draw [->>] (x) -- (y);
  \draw [->>] (y) -- (z);
  \draw [->>] (z) -- (t);
  \draw [->>] (t) -- (u);

  \draw [->>] (y) to (r);
  \draw [->>] (r) to (t);
  \draw [->>] (x) to [bend left=30] (r);
  \draw [->>] (r) to [bend left=30] (u);

  \fill (r) circle (2pt);
  \end{tikzpicture}
&
\begin{tikzpicture}[node distance=1.5cm]
  \node(x){};
  \node(y)[right of = x]{};
  \node(z)[below of = y]{};
  \node(t)[right of = z]{};
  \node(u)[below of = t]{};
  \node(r)[right of = y]{};

  \draw [->>] (x) -- (y);
  \draw [->>] (y) -- (z);
  \draw [->>] (z) -- (t);
  \draw [->>] (t) -- (u);

  \draw [dashed] (y) to node[below] {$\sim_b$} (r);
  \draw [dashed] (r) to node[above, rotate=90] {$\succeq_a$} (t);
  \draw [->>] (x) to [bend left=30] (r);
  \draw [->>] (r) to [bend left=30] (u);

  \fill (r) circle (2pt);
  \end{tikzpicture}
\end{tabular}

\begin{tabular}{ccc}
\begin{tikzpicture}[node distance=1.5cm]
  \node(x){};
  \node(y)[right of = x]{};
  \node(z)[below of = y]{};
  \node(t)[right of = z]{};
  \node(u)[below of = t]{};
  \node(r)[right of = y]{};

  \draw [->>] (x) -- (y);
  \draw [->>] (y) -- (z);
  \draw [->>] (z) -- (t);
  \draw [->>] (t) -- (u);

  \draw [dashed] (y) to node[below] {$\preceq_b$} (r);
  \draw [dashed] (r) to node[above, rotate=90] {$\sim_a$} (t);
  \draw [->>] (x) to [bend left=30] (r);
  \draw [->>] (r) to [bend left=30] (u);

  \fill (r) circle (2pt);
  \end{tikzpicture}
&
\begin{tikzpicture}[node distance=1.5cm]
  \node(x){};
  \node(y)[right of = x]{};
  \node(z)[below of = y]{};
  \node(t)[right of = z]{};
  \node(u)[below of = t]{};
  \node(r)[right of = y]{};

  \draw [->>] (x) -- (y);
  \draw [->>] (y) -- (z);
  \draw [->>] (z) -- (t);
  \draw [->>] (t) -- (u);

  \draw [dashed] (y) to node[below] {$\preceq_b$, $\succeq_a$} (r);
  \draw [->>] (x) to [bend left=30] (r);
  \draw [dashed] (r) to [bend left=30] node[below, rotate=90] {$\preceq_a$} (u);

  \fill (r) circle (2pt);
  \end{tikzpicture}
&
\begin{tikzpicture}[node distance=1.5cm]
  \node(x){};
  \node(y)[right of = x]{};
  \node(z)[below of = y]{};
  \node(t)[right of = z]{};
  \node(u)[below of = t]{};
  \node(r)[right of = y]{};

  \draw [->>] (x) -- (y);
  \draw [->>] (y) -- (z);
  \draw [->>] (z) -- (t);
  \draw [->>] (t) -- (u);
  \draw [->>] (r) -- (y);

  \draw [dashed] (r) to node[above, rotate=90] {$\preceq_b$} node [below, rotate=90] {$\succeq_a$} (t);

  \fill (r) circle (2pt);
  \end{tikzpicture}
\end{tabular}

\end{proof}

Theorem~\ref{thm:sfp-ne} below is a straightforward corollary of Lemma~\ref{lem:fp-max-impr}, and still refers to sheaths.  

\begin{theorem}\label{thm:sfp-ne}
If a finite two-player game with strict-weak-order preferences has no strongly forbidden pattern, the maximising improvement induces a sheath containing an NE.  
\end{theorem}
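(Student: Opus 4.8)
The plan is to argue by contradiction: I would suppose that no sheath induced by a maximising-improvement path contains an NE, and derive a contradiction. Under this assumption no maximising-improvement path can \emph{reach} an NE, since the terminal profile of such a path always lies in its own sheath. Consequently every maximising-improvement path can be properly extended: its last profile is not an NE, the player who just moved sits at a best response and cannot move again, so the other player has an improving (hence maximising) move. Starting from an arbitrary profile I would thus build an infinite, necessarily player-alternating, maximising-improvement path $s_0\twoheadrightarrow_b s_1\twoheadrightarrow_a s_2\twoheadrightarrow_b s_3\twoheadrightarrow_a s_4\twoheadrightarrow_b\cdots$ and then feed each length-four window into Lemma~\ref{lem:fp-max-impr}.

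The crux is to check that the minimality hypothesis of Lemma~\ref{lem:fp-max-impr} holds for free at every such window. Since the whole game contains no strongly forbidden pattern, no sheath does either, so only the ``no shorter maximising-improvement path from $s_1$ to $s_4$ within the sheath'' clause needs attention. Here I would exploit the maximising (best-response) nature of the steps at the two endpoints of the window: $s_1$ is reached by a $b$-step, so $b$ already sits at a best response and only $a$ can move out of $s_1$; and the window is followed by $s_4\twoheadrightarrow_b s_5$, so $b$ can still strictly improve at $s_4$. A length-one shortcut is impossible, because an $a$-move out of $s_1$ leaves $b$'s strategy unchanged whereas $s_1$ and $s_4$ have different $b$-strategies; a length-two shortcut $s_1\twoheadrightarrow_a w\twoheadrightarrow_b s_4$ is impossible, because its last step would force $b$ to be at a best response at $s_4$, contradicting $s_4\twoheadrightarrow_b s_5$. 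Hence no shorter path exists at all, a fortiori none within the sheath, and the lemma applies. I expect this verification to be the only genuinely delicate point.

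With the lemma available at every window, its conclusion gives, for each $k$, that either the bullet $(s_{2k+1}^a,s_{2k+3}^b)$ of the sheath is an NE, or $v(s_{2k+3})\prec_b v(s_{2k+1})$. If the first alternative ever occurs, that NE lies in the sheath of the corresponding maximising-improvement path, contradicting the standing assumption. If the second always occurs, the chain $v(s_1)\succ_b v(s_3)\succ_b v(s_5)\succ_b\cdots$ is an infinite strictly $\prec_b$-descending sequence of outcomes, which is impossible since the game is finite and $\prec_b$ is a strict weak order. Either way I reach a contradiction, so some maximising-improvement sheath contains an NE. The only cosmetic point left is that if the first move out of $s_0$ happens to be an $a$-step rather than a $b$-step, one simply starts reading windows one step later (or invokes the symmetry of the statement), since an infinite alternating path contains infinitely many windows of the orientation required by Lemma~\ref{lem:fp-max-impr}.
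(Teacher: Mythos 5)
Your proof is correct and follows essentially the same route as the paper's: both feed consecutive windows $s_{2k}\twoheadrightarrow_b s_{2k+1}\twoheadrightarrow_a\cdots\twoheadrightarrow_a s_{2k+4}$ of a non-terminating maximising-improvement sequence into Lemma~\ref{lem:fp-max-impr} and derive a contradiction from the resulting strictly $\prec_b$-descending chain of outcomes in a finite game. The only difference is how the lemma's minimality hypothesis is discharged: the paper extracts a cycle of minimal length among those with NE-free sheaths and appeals to that minimality, whereas you verify directly that the best-response nature of maximising steps excludes shortcuts of length one or two from $s_1$ to $s_4$ --- a more explicit (and arguably more watertight) justification of the same side condition.
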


\begin{proof}
Let us assume that it does not, so by finiteness of the game there is a cycle of maximising improvement with no NE in its sheath. Let us consider such a cycle of minimal length, say, $s_0\twoheadrightarrow_b s_1\twoheadrightarrow_a s_2\twoheadrightarrow_b \dots \twoheadrightarrow_b s_{2n-1}\twoheadrightarrow_a s_0$. By Lemma~\ref{lem:fp-max-impr} the cycle $v(s_1) \succ_b v(s_3) \succ_b \dots \succ_b v(s_{2n-1})\succ_b v(s_1)$ holds, contradiction.
\end{proof}

The conditions of Theorem~\ref{thm:sfp-ne} are not necessary, though, as shown below. The leftmost game is a PUB-pattern and the second game has an elementary cycle but maximising improvement terminates. By the way, the third game below has an NE although the maximising improvement may generate a sheath without NE. More importantly, the first game below is an exact potential game, and the fourth game is not even an ordinal potential game, whereas it is not a forbidden pattern. It shows that the class of (ordinal) potential games and the class of games without forbidden patterns are not included one in another.  
 
\begin{tabular}{c@{\hspace{1cm}}c@{\hspace{1cm}}c@{\hspace{1cm}}c}
$\begin{array}{|c|c|}
	  \cline{1-2}
 	    0,0 & 1,3 \\
	  \cline{1-2}
	  1,0 & 0,1 \\
	  \cline{1-2}
	 \end{array}$
&	
$\begin{array}{|c|c|c|}
	  \cline{1-3}
 	   1,0 & 0,1 & 0,2\\
	  \cline{1-3}
	    0,1 & 1,0 & 0,0 \\
	  \cline{1-3}
	 \end{array}$
&
$\begin{array}{|c|c|c|}
	  \cline{1-3}
 	    1,0 & 0,1 & 0,0\\
	  \cline{1-3}
	    0,1 & 1,0 & 0,0 \\
	  \cline{1-3}
	    0,0 & 0,0 & 1,1 \\
	  \cline{1-3}
	 \end{array}$
&
$\begin{array}{|c|c|}
	  \cline{1-2}
 	    0,0 & 0,0 \\
	  \cline{1-2}
	  1,0 & 0,1 \\
	  \cline{1-2}
	 \end{array}$
\end{tabular}

Theorem~\ref{thm:pot-gs} below states that the game structures where assigning arbitrary acyclic preferences always yields the FIP are exactly those without forbidden rectangles. At first it might have seemed difficult (\textit{e.g.} NP-hard) to decide which structure has this property, but this characterisation shows that it is at most quadratic in the number of profiles.

\begin{theorem}\label{thm:pot-gs}
Given a two-player game structure, the following are equivalent.
\begin{enumerate}
\item\label{thm:pot-gs1} The game structure has no forbidden rectangle. More specifically, for all $s^1_a$ and $s^2_a$ (resp. $s^1_b$ and $s^2_b$) strategies for player $a$ (resp. $b$), either $v(s^1_a,s^1_b) = v(s^2_a,s^1_b)$, or $v(s^2_a,s^1_b) = v(s^2_a,s^2_b)$, or $v(s^2_a,s^2_b) = v(s^1_a,s^2_b)$, or $v(s^1_a,s^2_b) = v(s^1_a,s^1_b)$.
\item\label{thm:pot-gs2} Equipping the game structure with any two acyclic preferences (resp. instantiating it with win-lose outcomes) yields a game without $\twoheadrightarrow$-cycles of length four.
\item\label{thm:pot-gs3} Equipping the game structure with any two acyclic preferences yields a game without $\twoheadrightarrow$-cycles.
\item\label{thm:pot-gs4} Equipping the game structure with any two acyclic preferences yields a game whose subgames all have NE.
\end{enumerate}
\end{theorem}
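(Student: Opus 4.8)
The plan is to establish the equivalence of the first three statements via the cycle $\ref{thm:pot-gs1}\Rightarrow\ref{thm:pot-gs3}\Rightarrow\ref{thm:pot-gs2}\Rightarrow\ref{thm:pot-gs1}$, and then to fold in the fourth via $\ref{thm:pot-gs1}\Rightarrow\ref{thm:pot-gs4}\Rightarrow\ref{thm:pot-gs1}$. The engine of the whole argument is one structural observation: \emph{each of the three forbidden patterns forces all four edges of its underlying rectangle to carry distinct outcomes at their two endpoints}. Indeed, in EC all four edges are strict-improvement arrows; in PUB the three solid arrows are strict improvements while the fourth (dashed) edge carries $\prec_a$, hence is strict as well; and in PUC all four edges are strict-improvement arrows. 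So whenever a forbidden pattern occurs, its rectangle is a forbidden rectangle in the sense of \ref{thm:pot-gs1}. Contrapositively, a structure without forbidden rectangles admits no forbidden pattern \emph{for any choice of preferences whatsoever}.

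\textbf{From \ref{thm:pot-gs1} to \ref{thm:pot-gs3}, and \ref{thm:pot-gs3} to \ref{thm:pot-gs2}.} I would fix acyclic preferences $\prec_a,\prec_b$ and pass to linear extensions: on the finite set of outcomes actually attained, the transitive closure of each acyclic $\prec_P$ is a strict partial order, which I extend to a strict total order $<_P$. A strict total order is a strict weak order, and $\prec_P\,\subseteq\,<_P$, so every $\twoheadrightarrow$-step of the original game is a $\twoheadrightarrow$-step of the extended game; in particular every cycle of the former is a cycle of the latter. By the observation above the extended game still has no forbidden pattern, so Corollary~\ref{cor:fp-term} (strict weak orders, no forbidden pattern) gives termination, i.e.\ the extended game has no $\twoheadrightarrow$-cycle, whence neither does the original. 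The implication $\ref{thm:pot-gs3}\Rightarrow\ref{thm:pot-gs2}$ is immediate, since a cycle of length four is a cycle and win-lose preferences are a special case of acyclic ones.

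\textbf{From \ref{thm:pot-gs2} to \ref{thm:pot-gs1}, from \ref{thm:pot-gs4} to \ref{thm:pot-gs1}, and from \ref{thm:pot-gs1} to \ref{thm:pot-gs4}.} The first two I would argue by contraposition using one and the same gadget. Given a forbidden rectangle on profiles $x,y,z,t$, with $x,z$ on one diagonal and $y,t$ on the other, colour $v(x),v(z)$ as ``$b$ wins'' and $v(y),v(t)$ as ``$a$ wins'', extending arbitrarily elsewhere. The four forbidden inequalities $v(x)\neq v(y)$, $v(y)\neq v(z)$, $v(z)\neq v(t)$, $v(t)\neq v(x)$ are exactly what guarantees this colouring factors through $v$ (no outcome is asked to take both colours), so it is a legitimate win-lose instantiation, hence acyclic. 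By construction the rectangle becomes an elementary cycle, i.e.\ a $\twoheadrightarrow$-cycle of length four (refuting \ref{thm:pot-gs2}), and, being a $2\times2$ subgame every profile of which has an improving move, it has no NE (refuting \ref{thm:pot-gs4}). Finally $\ref{thm:pot-gs1}\Rightarrow\ref{thm:pot-gs4}$ is inherited: every subgame of a structure without forbidden rectangles again has none, the property being local to rectangles, so $\ref{thm:pot-gs1}\Rightarrow\ref{thm:pot-gs3}$ applies to each subgame equipped with the restricted (still acyclic) preferences, giving an acyclic $\twoheadrightarrow$; as the subgame is finite, this $\twoheadrightarrow$ has a terminal state, which is an NE.

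\textbf{Main obstacle.} The delicate step is $\ref{thm:pot-gs1}\Rightarrow\ref{thm:pot-gs3}$, and two points within it deserve care. First, the bookkeeping behind the key observation, namely that each forbidden pattern really does make all four rectangle edges live; the PUB case is the only non-obvious one, since one of its edges is dashed and one must read strictness off the $\prec_a$ label. Second, the legitimacy of the linear-extension reduction: one must be sure that extending each $\prec_P$ cannot create a forbidden pattern, which is precisely why the observation is stated in the strong ``for any preferences'' form rather than tied to the original preferences.
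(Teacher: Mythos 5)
Your proposal is correct and follows essentially the same route as the paper: the key direction \ref{thm:pot-gs1}$\Rightarrow$\ref{thm:pot-gs3} is handled exactly as in the paper (transitive closure plus linear extension, then Corollary~\ref{cor:fp-term}, using that absence of forbidden rectangles rules out forbidden patterns for any preferences), and your win-lose gadget and subgame-heredity arguments are just the natural fillings of the steps the paper declares straightforward. The only difference is the cosmetic rearrangement of which implications close the equivalence cycle.
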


\begin{proof}
\ref{thm:pot-gs1}$\Leftrightarrow$\ref{thm:pot-gs2} and \ref{thm:pot-gs3}$\Rightarrow$\ref{thm:pot-gs4} and \ref{thm:pot-gs4}$\Rightarrow$\ref{thm:pot-gs1} are straightforward. For \ref{thm:pot-gs1}$\Rightarrow$\ref{thm:pot-gs3}, let us augment the preferences by transitive closures followed by linear extension, and let us invoke Corollary~\ref{cor:fp-term} since absence of forbidden rectangle in the structure implies absence of forbidden pattern in the games. 
\end{proof}

\subsection{The three-player case}\label{sect:afi-3p}

It is unclear whether and how Theorem~\ref{thm:pot-gs} may be generalised to three-player game structures, but some facts are worth noting before embarking on such generalisation. First, the leftmost two tables below represent a game structure with players $a$, $b$ and $c$, where $c$ chooses the left or right array. Slicing the game, \textit{i.e.}, fixing a strategy for one of the players always yields a subgame without $\twoheadrightarrow$-cycles, whatever the acyclic preferences of the remaining two players may be, but defining $z<_ay<_ax$ and $y<_bt<_bz$ and $x<_ct<_cy$ yields the cycle $(1,1,1)\twoheadrightarrow_c (1,1,2)\twoheadrightarrow_b (1,2,2)\twoheadrightarrow_a (2,2,2)\twoheadrightarrow_b (2,1,2)\twoheadrightarrow_c(2,1,1)\twoheadrightarrow_a (1,1,1)$. This means that ruling out two-player cycles is not sufficient to rule out three-player cycles.

\[\begin{array}{cc@{\hspace{2cm}}cc}
\begin{array}{c|c|c|}  
  	\multicolumn{1}{c}{}&
	  \multicolumn{1}{c}{b_1}&
	  \multicolumn{1}{c}{b_2}\\
	  \cline{2-3}
 	a_1 & x & z \\
	  \cline{2-3}
	  a_2 &  y & y \\
	  \cline{2-3}
	 \end{array}
&
\begin{array}{c|c|c|}
 	\multicolumn{1}{c}{}&
	  \multicolumn{1}{c}{b_1}&
	  \multicolumn{1}{c}{b_2}\\
	  \cline{2-3}
 	 a_1 &  t & z \\
	  \cline{2-3}
	    a_2 &   t & y \\
	  \cline{2-3}
	 \end{array}
&
\begin{array}{c|c|c|c|}
	\multicolumn{1}{c}{}&
	  \multicolumn{1}{c}{b_1}&
	  \multicolumn{1}{c}{b_2}&
	  \multicolumn{1}{c}{b_3}\\
	  \cline{2-4}
 	  a_1 &  x & x & x\\
	  \cline{2-4}
	   a_2 & x & y & y \\
	  \cline{2-4}
	 \end{array}
&
\begin{array}{c|c|c|c|}
	\multicolumn{1}{c}{}&
	  \multicolumn{1}{c}{b_1}&
	  \multicolumn{1}{c}{b_2}&
	  \multicolumn{1}{c}{b_3}\\
	  \cline{2-4}
 	  a_1 &  z & y & z\\
	  \cline{2-4}
	   a_2 & z & y & y \\
	  \cline{2-4}
	 \end{array}
\end{array}\]

\noindent Second, although the absence of cycles in a two-player game structure is equivalent to the absence of a cycle in all of its $2\times 2$ subgame structures, the right-hand three-player game structure above enjoys different properties: equipping it with acyclic preferences cannot yield cycles in any of its three $2\times 2\times 2$ subgame structures, but defining $z<_ay<_ax$ and $x<_by<_bz$ and $z<_cx<_cy$ yields the cycle $(2,1,1)\twoheadrightarrow_b (2,2,1)\twoheadrightarrow_a (1,2,1)\twoheadrightarrow_c(1,2,2)\twoheadrightarrow_b(1,3,2)\twoheadrightarrow_a(2,3,2)\twoheadrightarrow_b(2,1,2)\twoheadrightarrow_c(2,1,1)$

\subsection{$\epsilon$-improvement and $\epsilon$-Nash equilibrium in infinite games}\label{sect:ei-ene-ig}

Definition~\ref{def:epsilon-NE} below recalls the well-known notion of $\epsilon$-Nash equilibrium. Lemma~\ref{lem:strong-uniform-continuity} relates to yet differs from uniform continuity and its proof is in appendix. Finally, Proposition~\ref{prop:fp-ig} generalises Corollary~\ref{cor:fp-term} for infinite games.

\begin{definition}[$\epsilon$-improvement and $\epsilon$-Nash equilibrium]\label{def:epsilon-NE}
In a two-player game where the outcomes are real-valued payoff functions, an $\epsilon$-improvement is an improvement w.r.t. the following $\epsilon$-preferences, and an $\epsilon$-NE is an NE w.r.t. to these preferences.
\[(x,y)<_a^\epsilon (x',y')\,:=\, x+\epsilon < x'\qquad (x,y)<_b^\epsilon (x',y')\,:=\, y+\epsilon < y'\]
\end{definition}

\begin{lemma}\label{lem:strong-uniform-continuity}
Let $f:S \to \mathbb{R}^n$ be a continuous function on a compact space $S :=\prod_{1\leq i\leq n}S_i$, where $1 \leq n$, and let $0 < \epsilon$. There exist partitions $(S_{ij})_{1\leq j \leq m}$ of the $S_i$ such that for all $j\in \{1,\dots,m\}^n$, for all $x,y\in \prod_{1\leq i\leq n}S_{i,j(i)}$, we have $|f(x)-f(y)| < \epsilon$.
\end{lemma}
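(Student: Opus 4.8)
The plan is to combine a compactness argument with a Boolean-algebra refinement of a finite open cover, arranged so that the resulting grid partition respects the product structure. First I would record two preliminaries: each factor $S_i$ is compact, being the image of the compact $S$ under the continuous projection $\pi_i$ (if $S$ is empty the statement is vacuous), and the product topology on the finite product $S$ has a basis of sets $\prod_{1\le i\le n} U_i$ with each $U_i$ open in $S_i$.

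Next, for each point $x \in S$, continuity of $f$ yields a basic open neighborhood $U^x = \prod_i U^x_i$ on which $|f(y)-f(x)| < \epsilon/2$, hence on which any two values of $f$ differ by less than $\epsilon$. By compactness of $S$, finitely many such neighborhoods $U^{x^1},\dots,U^{x^k}$ cover $S$; write $U^{x^\ell} = \prod_i U^\ell_i$.

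The key step is to convert this cover into a grid partition. For each factor $i$, consider the finite family $\{U^\ell_i : 1\le \ell\le k\}$ of open subsets of $S_i$ and partition $S_i$ into the atoms of the Boolean algebra it generates, i.e.\ two points lie in the same block exactly when they belong to the same sub-family of the $U^\ell_i$. This yields at most $2^k$ nonempty blocks per factor; taking $m := 2^k$ and allowing some blocks to be empty gives a uniform index range, so I can write the partition as $(S_{ij})_{1\le j\le m}$. The crucial property is that each block $S_{ij}$ is \emph{entirely} contained in $U^\ell_i$ as soon as it meets $U^\ell_i$, since membership in each $U^\ell_i$ is constant along a block by construction.

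It remains to verify the oscillation bound on each product cell. Fix $j\in\{1,\dots,m\}^n$ with $\prod_i S_{i,j(i)}$ nonempty and pick a point $p$ in it. Since the $U^{x^\ell}$ cover $S$, some $\ell$ satisfies $p\in U^{x^\ell}$, that is $p_i\in U^\ell_i$ for every $i$; by the crucial property this forces $S_{i,j(i)}\subseteq U^\ell_i$ for every $i$, whence $\prod_i S_{i,j(i)}\subseteq U^{x^\ell}$. As $f$ varies by less than $\epsilon$ on $U^{x^\ell}$, we conclude $|f(x)-f(y)| < \epsilon$ for all $x,y$ in the cell, and empty cells are vacuously fine. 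The main obstacle is conceptual rather than computational: a finite subcover need not be a product grid, and it is precisely the Boolean-atom refinement that turns an arbitrary finite subcover into a product-compatible partition while preserving the oscillation control established by continuity and compactness.
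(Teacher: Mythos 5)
Your proof is correct and follows essentially the same route as the paper's: a finite cover of $S$ by open product boxes on each of which $f$ oscillates by less than $\epsilon$, refined factor-by-factor into the Boolean atoms generated by the box edges (exactly the paper's sets $B_i^u$). The only differences are cosmetic --- you obtain the initial cover from pointwise $\epsilon/2$-neighbourhoods rather than by pulling back an overlapping grid of intervals on the range, and you tolerate empty blocks where the paper deletes and splits blocks to equalise the cardinalities of the $n$ partitions.
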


\begin{proposition}\label{prop:fp-ig}
Let a game $\langle \{a,b\},S_a,S_b,\mathbb{R}^2,v,<_a^0,<_b^0\rangle$ have compact strategy sets, a continuous $v$ for the product topology, and no forbidden pattern. For $0 < \epsilon$, the collective $\epsilon$-improvement terminates on an $\epsilon$-NE.
\end{proposition}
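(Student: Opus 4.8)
The plan is to discretise the infinite game into a finite subgame to which Corollary~\ref{cor:fp-term} applies, and to show that every $\epsilon$-improvement step of the infinite game descends to a genuine improvement step of this finite subgame. First I would apply Lemma~\ref{lem:strong-uniform-continuity} to the continuous function $v$ on the compact space $S_a\times S_b$ with parameter $\epsilon/2$, obtaining finite partitions $(S_{a,j})_j$ of $S_a$ and $(S_{b,k})_k$ of $S_b$ such that $v$ oscillates by less than $\epsilon/2$ on each product block $S_{a,j}\times S_{b,k}$; writing $v_a,v_b$ for the two coordinates of $v$, each player's payoff then oscillates by less than $\epsilon/2$ on each block. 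I would pick one representative strategy $\bar a_j\in S_{a,j}$ and $\bar b_k\in S_{b,k}$ in every nonempty block, and let $\bar G$ be the subgame of $G$ induced by the finite strategy sets $\{\bar a_j\}_j$ and $\{\bar b_k\}_k$. Being a subgame, $\bar G$ keeps the strict-weak-order preferences $<^0_a,<^0_b$ and inherits the absence of forbidden patterns (a forbidden pattern of $\bar G$ would be one of $G$), so by Corollary~\ref{cor:fp-term} its collective improvement terminates.

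The key step is the descent. Consider an $\epsilon$-improvement step $s\twoheadrightarrow_a s'$ of $G$, so $s$ and $s'$ share their $b$-coordinate and $v_a(s')>v_a(s)+\epsilon$. Since $v_a$ oscillates by less than $\epsilon/2<\epsilon$ on each block, $s$ and $s'$ lie in different $a$-blocks, say with indices $j\neq j'$ over the common $b$-block $k$, hence $\bar a_j\neq\bar a_{j'}$ and the representative profiles $\bar s:=(\bar a_j,\bar b_k)$ and $\bar s':=(\bar a_{j'},\bar b_k)$ form a genuine $a$-convertibility step of $\bar G$. Moreover $\bar s$ lies in the same block as $s$ and $\bar s'$ in the same block as $s'$, so combining the two oscillation bounds with $v_a(s')>v_a(s)+\epsilon$ gives $v_a(\bar s')>v_a(s')-\epsilon/2>v_a(s)+\epsilon/2>v_a(\bar s)$, i.e. a genuine improvement $\bar s\twoheadrightarrow_a\bar s'$ of $\bar G$ (and symmetrically for $b$-steps). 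Thus every infinite $\epsilon$-improvement sequence of $G$ projects, block by block, to an infinite improvement sequence of $\bar G$, contradicting termination in $\bar G$. Hence the collective $\epsilon$-improvement of $G$ terminates, and since its terminal profiles are exactly those from which no player can $\epsilon$-improve, that is the $\epsilon$-NE of Definition~\ref{def:epsilon-NE}, it terminates on an $\epsilon$-NE.

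The main obstacle is conceptual rather than computational: one must realise that the finite game has to be carved out as a genuine subgame, by fixing a representative \emph{individual strategy} per block rather than a representative \emph{profile} per cell, for otherwise the four corners of a rectangle would not share their rows and columns, they would not form a rectangle of $G$, and the crucial inheritance of the no-forbidden-pattern property would break down. The remaining work is the elementary margin bookkeeping between $\epsilon$ and $\epsilon/2$ that converts a block-crossing $\epsilon$-step into a strict payoff improvement of the representatives.
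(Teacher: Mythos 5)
Your proof is correct and follows essentially the same route as the paper's: discretise via Lemma~\ref{lem:strong-uniform-continuity} and then invoke Corollary~\ref{cor:fp-term} on a finite game whose genuine improvement steps dominate the $\epsilon$-improvement steps of the original game. The only difference is packaging: the paper forms a quotient game on the partition blocks with an existentially defined preference and uses an $\epsilon/3$ margin to transfer the absence of forbidden patterns to that derived game, whereas you restrict to a subgame of fixed representatives (so that absence is inherited for free) and shift the work to the $\epsilon/2$ margin bookkeeping showing a block-crossing $\epsilon$-step yields a strict improvement between representatives; both variants are sound.
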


\begin{proof}
By Lemma~\ref{lem:strong-uniform-continuity}, let $(S_{aj})_{1\leq j \leq m}$ and $(S_{bj})_{1\leq j \leq m}$ be partitions of $S_a$ and $S_b$, respectively, such that for all $j,k\in \{1,\dots,m\}$, for all $s,s'\in S_{aj} \times S_{bk}$, we have $|v(s)-v(s')| < \epsilon / 3$. Let us derive a game $\langle \{a,b\},\{S_{a1},\dots, S_{am}\},\{S_{b1},\dots, S_{bm}\},\{S_{a1},\dots, S_{am}\} \times \{S_{b1},\dots, S_{bm}\},\mathrm{id},\prec_a,\prec_b\rangle$, where $(S_{aj},S_{bk}) \prec_a (S_{aj'},S_{bk'})$ if there exist $s$ and $s'$ in $S_{aj} \times S_{bk}$ and $S_{aj'} \times S_{bk'}$, respectively, such that $v(s)  <_a^\epsilon v(s')$. (And likewise for $b$.) So, each $\epsilon$-improvement step in the original game is matched by an improvement step in the derived game. Also, note that $(S_{aj},S_{bk}) \prec_a (S_{aj'},S_{bk'})$ implies $v(s) + \epsilon /3 < v(s')$ for all $s$ and $s'$ in $S_{aj} \times S_{bk}$ and $S_{aj'} \times S_{bk'}$, respectively. So, neither the derived game has forbidden patterns, so improvement therein terminates by Corollary~\ref{cor:fp-term}, and $\epsilon$-improvement terminates in the original game.
\end{proof}

\section{Inductive description of the matrices without forbidden rectangles}\label{sect:id-mwfs}

Section~\ref{sect:cife} showed the importance of the game structures, \textit{i.e.}, matrices without forbidden rectangles. The current section gives an analytical, inductive description of these, which is then used to prove a further result about them. Let us start by a useful observation below.

\begin{observation}\label{obs:nfs-perm}
Whether a matrix has forbidden rectangles or not is preserved under permutation of rows and columns.
\end{observation}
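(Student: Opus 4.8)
The plan is to show that the property of ``having no forbidden rectangle'' depends only on the pattern of equalities and inequalities among the entries of the matrix, and that this pattern is unaffected by reordering rows or columns. Recall from Definition~\ref{def:subgame} that a rectangle is a $2\times 2$ sub game structure, determined by a choice of two rows $s^1_a, s^2_a$ and two columns $s^1_b, s^2_b$, and that by the criterion in Theorem~\ref{thm:pot-gs}\eqref{thm:pot-gs1} a rectangle is \emph{forbidden} precisely when all four of its boundary edges carry distinct outcomes, i.e.\ when $v(s^1_a,s^1_b)\neq v(s^2_a,s^1_b)$ and $v(s^2_a,s^1_b)\neq v(s^2_a,s^2_b)$ and $v(s^2_a,s^2_b)\neq v(s^1_a,s^2_b)$ and $v(s^1_a,s^2_b)\neq v(s^1_a,s^1_b)$. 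So the whole matrix has no forbidden rectangle iff every such choice of two rows and two columns fails this condition, i.e.\ has at least one boundary edge with a repeated outcome.

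First I would observe that permuting rows and columns is generated by transpositions, so it suffices to treat a single swap of two rows (the column case being symmetric). A row (or column) permutation $\sigma$ merely relabels the index set of the rows: the new matrix $v'$ satisfies $v'(s_a, s_b) = v(\sigma(s_a), s_b)$. The key point is that the set of rectangles of $v'$ is, as a collection of $2\times 2$ sub game structures, identical to the set of rectangles of $v$: choosing rows $\{s^1_a, s^2_a\}$ and columns $\{s^1_b, s^2_b\}$ in $v'$ selects exactly the same four outcomes as choosing rows $\{\sigma(s^1_a), \sigma(s^2_a)\}$ and the same columns in $v$, since a rectangle is an unordered pair of rows crossed with an unordered pair of columns. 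Because the forbidden-rectangle criterion is a statement purely about the four outcome values at those four positions and their equalities, and these four values are the same multiset in $v$ and in $v'$, the rectangle is forbidden in $v'$ iff the corresponding rectangle is forbidden in $v$.

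Therefore the map induced by $\sigma$ is a bijection from the rectangles of $v'$ to the rectangles of $v$ that preserves the property of being forbidden. Consequently $v'$ has a forbidden rectangle iff $v$ does, which is the claim. Since an arbitrary row-and-column permutation is a composition of such single swaps, the property is preserved under the full permutation group, completing the argument. I expect no real obstacle here; the only thing to be careful about is to state explicitly that a rectangle is determined by \emph{unordered} pairs of rows and columns, so that the correspondence between rectangles before and after the permutation is genuinely a bijection and the boundary-edge condition is symmetric in the two rows and in the two columns.
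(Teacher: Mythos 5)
Your argument is correct: a forbidden rectangle is determined by an unordered pair of rows and an unordered pair of columns together with the four outcome values at their intersections, and a permutation of rows and columns induces a bijection on such rectangles preserving those values, hence preserving forbiddenness. The paper states this observation without proof, treating it as immediate, and your write-up is exactly that immediate argument made explicit, so there is nothing to object to.
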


Next, let us consider a matrix where the entry $x$ occurs. Lemma~\ref{lem:ind-fs} below (proof is in the appendix) states that if the matrix has no forbidden rectangle, it looks like the one below up to row and column permutation, where $x$ occurs exactly in the area above the staircase on the top-left corner, where the $y_i$-stripes and the $z_j$-stripes correspond to area with constant entry, $y_i$ or $z_j$, where the big white rectangle has no forbidden rectangle, and finally where every entry at the intersection of any $y_i$-row and any $z_j$-column has value $y_i$ or $z_j$.

\begin{tikzpicture}[scale=.7]
\draw[thick] (0,0) rectangle (10,10);

\draw[thick] (0,3) -- (1,3);
\draw[dashed] (1,3) -- (1,4) -- (2,4);
\draw[thick] (2,4) -- (2,5) -- (3,5) -- (3,6) -- (5,6) -- (5,8) -- (6,8);
\draw[dashed] (6,8) -- (6,9) -- (7,9);
\draw[thick] (7,9) -- (7,10);

\node at (2,8) {$x$};
\node at (2,.5) {$y_n$};
\node at (2,1.5) {$y_{n-1}$};
\node at (2,2.5) {$y_{n-2}$};
\node at (2.5,4.5) {$y_?$};

\node at (5.5,7.5) [rotate = -90] {$z_?$};
\node at (7.5,7.5) [rotate = -90] {$z_{m-2}$};
\node at (8.5,7.5) [rotate = -90] {$z_{m-1}$};
\node at (9.5,7.5) [rotate = -90] {$z_m$};

\draw[thick] (3,5) -- (3,0);
\draw[very thin] (6,8) -- (6,6);
\draw[very thin] (7,9) -- (7,6);
\draw[very thin] (8,10) -- (8,6);
\draw[very thin] (9,10) -- (9,6);

\draw[very thin] (0,1) -- (3,1);
\draw[very thin] (0,2) -- (3,2);
\draw[very thin] (1,3) -- (3,3);
\draw[very thin] (2,4) -- (3,4);
\draw[thick] (5,6) -- (10,6);
\end{tikzpicture}

\begin{lemma}\label{lem:ind-fs}
Let $A=(a_{ij})_{1\leq i\leq n\,\wedge 1\leq j\leq m}$ be a matrix over some set $X$ and let $x\in X$ occurs in $A$. If $A$ has no forbidden rectangle, the following assertions hold: 

\begin{itemize}
\item There exist $\varphi$ and $\theta$ permutations of $\{1,\dots,n\}$ and $\{1,\dots,m\}$, respectively, and a non-increasing function $s:\{1,\dots,m\}\to\{0,\dots,n\}$ such that $1\leq s(1)$ and $i\leq s(j)\,\Leftrightarrow\, b_{ij}=x$, where $b_{ij}:=a_{\varphi(i)\theta(j)}$.
\item There exists also a natural number $k\leq m$ such that:
\begin{enumerate}
\item\label{ind-fs:cond1} if $k<m$, then $s(k+1)<s(k)$, 
\item\label{ind-fs:cond2} if $k<m$ and $s(j)<i\leq i'\leq s(k+1)$, then $b_{ij}=b_{i'j}$,
\item\label{ind-fs:cond3}if $j\leq j'\leq k$ and $s(j)<i$, then $b_{ij}=b_{ij'}$.
\item\label{ind-fs:cond4} if $k<m$, the submatrix $(b_{ij})_{s(k+1)+1\leq i\leq n\,\wedge k+1\leq j\leq m}$ has no forbidden rectangle.
\item\label{ind-fs:cond5} if $k<m$ and $s(k)<i$ and $s(j)<s(k+1)$, then $b_{ij}=b_{ik}$ or $b_{ij}=b_{s(k+1),j}$.
\end{enumerate}
\end{itemize}
\end{lemma}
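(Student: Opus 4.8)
The plan is to treat the two bullet points as two fairly different tasks: first producing the staircase shape of the $x$-cells, and then reading off the stripe-and-recursion structure around a single corner of that staircase. Observation~\ref{obs:nfs-perm} will be used throughout to move freely between $A$ and its permuted copy $(b_{ij})$ without losing forbidden-rectangle-freeness.

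For the first bullet I would argue purely about the $0$--$1$ pattern recording where $x$ occurs, and the one thing I really need is that forbidden-rectangle-freeness forces the $x$-supports of the rows to be nested. Indeed, if rows $i$ and $i'$ had incomparable supports, there would be columns $j,j'$ with $a_{ij}=x\neq a_{i'j}$ and $a_{i'j'}=x\neq a_{ij'}$, and then the $2\times 2$ submatrix on rows $\{i,i'\}$ and columns $\{j,j'\}$ would oppose $x$ to a non-$x$ value along each of its four edges, i.e.\ it would be a forbidden rectangle. Hence the row-supports form a chain. Ordering the rows by decreasing support and the columns by decreasing $x$-height $h(j):=|\{i:a_{ij}=x\}|$ then turns the $x$-cells into exactly $\{(i,j):i\le s(j)\}$ with $s:=h$ non-increasing (the chain guarantees that the rows containing a given column form a prefix); since $x$ occurs, $s(1)=\max_j h(j)\ge 1$. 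This delivers $\varphi$, $\theta$ and $s$.

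For the second bullet I would pick $k$ to be a corner of the staircase, so that condition~\ref{ind-fs:cond1} holds by construction, and I would dispatch condition~\ref{ind-fs:cond4} immediately, since it needs no new idea: a forbidden rectangle inside the sub-block $(b_{ij})_{s(k+1)<i,\;k<j}$ uses two of its rows and two of its columns, hence is already a forbidden rectangle of the whole matrix, which has none. The substance is therefore in conditions~\ref{ind-fs:cond2}, \ref{ind-fs:cond3} and~\ref{ind-fs:cond5}: the constancy of the vertical $z_j$-stripes, the constancy of the horizontal $y_i$-stripes, and the fact that the crossing region only contains values among the $y_i$ and $z_j$. Here the productive move is to work in the $x$-free part of the matrix but to form a $2\times 2$ block \emph{one of whose rows straddles the staircase of the two chosen columns}: that makes one vertical edge ``$x$ versus non-$x$'', hence unequal, and one horizontal edge unequal as well, so forbidden-rectangle-freeness forces a disjunction --- either the horizontal edge we want (row-constancy for~\ref{ind-fs:cond3}) or a competing vertical constancy in the taller column. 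A symmetric block straddling horizontally yields the dual disjunction for the $z_j$-stripes, and combining the two pins the crossing entries as required for~\ref{ind-fs:cond5}.

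I expect the genuine obstacle to be exactly the resolution of these disjunctions rather than any single edge-comparison: the argument is not one-shot, and the corner $k$ must be chosen extremally enough (and the stripes processed in the right order, essentially an induction on rows and columns) so that the alternatives collapse into honest constant stripes instead of merely propagating. Equal-height columns and equal plateaus of $s$ require separate care, since the straddling row needed to break the disjunction exists only where the staircase strictly descends; handling those plateaus correctly is, I believe, what forces the particular choice of $k$ and the particular index ranges appearing in conditions~\ref{ind-fs:cond2}, \ref{ind-fs:cond3} and~\ref{ind-fs:cond5}. Once $k$ and the propagation order are fixed, each individual implication reduces to a short comparison of the four edges of a single rectangle.
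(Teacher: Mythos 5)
Your treatment of the first bullet is correct and is essentially the paper's own argument: the paper defines a total preorder on the columns by inclusion of their $x$-supports (totality being exactly your nested-supports rectangle argument), sorts rows and columns accordingly, and reads off $s$ as the $x$-height of each column. Your dispatch of conditions~\ref{ind-fs:cond1} and~\ref{ind-fs:cond4} is also fine.

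For conditions~\ref{ind-fs:cond2}, \ref{ind-fs:cond3} and~\ref{ind-fs:cond5}, however, there is a genuine gap, and you point at it yourself: you never specify which corner $k$ is, and you never resolve the ``row-constant or column-constant'' disjunctions that your straddling rectangles produce. That resolution is the actual content of the second bullet. In the paper, condition~\ref{ind-fs:cond3} is not obtained from rectangle comparisons at all: $k$ is \emph{defined} as the largest column index lying at a strict descent of $s$ such that columns $1,\dots,k$ are already horizontally constant below the staircase (with $k=m$ in the all-horizontal-stripes case), so \ref{ind-fs:cond1} and \ref{ind-fs:cond3} hold by construction. The maximality of that $k$ is then precisely what breaks the disjunction for condition~\ref{ind-fs:cond2}: in the non-degenerate case it supplies a witness row $r$ and columns $c,c'$ with $s(k+1)=s(c')\leq s(c)<r$ and $b_{rc}\neq b_{rc'}$, and it is this inequality between two non-$x$ entries --- not an ``$x$ versus non-$x$'' edge --- that forces the vertical constancy of the $z$-stripes. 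An arbitrary corner provides no such witness, and the disjunction simply propagates instead of collapsing, exactly as you fear in your last paragraph. Condition~\ref{ind-fs:cond5} is then a single rectangle through the corner entry $b_{s(k),k}=x$, which is the one nontrivial step your sketch does correctly anticipate. So the architecture of your plan matches the paper's proof, but the step you explicitly defer (``the corner $k$ must be chosen extremally enough'') is where the proof lives, and it is left unexecuted.
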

The implication stated in Lemma~\ref{lem:ind-fs} is not an equivalence: indeed, the left-hand matrix below satisfies all the assertions from Lemma~\ref{lem:ind-fs} but has a forbidden rectangle in its lower-left corner. Furthermore, however the middle partial matrix below may be completed with entries different from $x$, it has always a forbidden rectangle. (To prevent it, the entry at the lower-left corner should be equal to $z$ and $t$ at once.)
\[\begin{array}{c@{\hspace{2cm}}c@{\hspace{2cm}}c}
\begin{array}{|c@{\;\vline\;}c@{\;\vline\;}c|}
	\cline{1-3}
	 x & x & t\\
	\cline{1-3}
	 x & y & y\\
	\cline{1-3}
	 t & z & t\\
	\cline{1-3}
\end{array}
&
\begin{array}{|c@{\;\vline\;}c@{\;\vline\;}c|}
	\cline{1-3}
	 x & x & \\
	\cline{1-3}
	 x & y & y\\
	\cline{1-3}
	  & z & t\\
	\cline{1-3}
\end{array}
&
\begin{array}{|c@{\;\vline\;}c|}
	\cline{1-2}
	 x & y\\
	\cline{1-2}
	 y & x\\
	\cline{1-2}
\end{array}
\end{array}\]

Although Lemma~\ref{lem:ind-fs} is not an equivalence, its inductive flavour is invoked in Lemma~\ref{lem:or} below, whose proof is in the appendix, to prove a simple necessary condition for a matrix to have no forbidden rectangle, which then implies the straightforward Corollary~\ref{cor:num-fs}.

\begin{lemma}\label{lem:or}
Let $A=(a_{ij})_{1\leq i\leq n\,\wedge 1\leq j\leq m}$ be a matrix over some set $X$. If $A$ has no forbidden rectangle, there exist $(y_1,\dots,y_n)\in X^n$ and $(z_1,\dots,z_m)\in X^m$ such that $a_{ij}=y_i$ or $a_{ij}=z_j$ whenever $1\leq i\leq n$ and $1\leq j\leq m$.
\end{lemma}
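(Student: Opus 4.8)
The plan is to induct on the size $n+m$ of the matrix, using Lemma~\ref{lem:ind-fs} to peel off constant stripes from which the labels $y_i$ and $z_j$ can be read directly. The base cases $n=1$ and $m=1$ are immediate: a single row is covered by $z_j:=a_{1j}$ (with $y_1$ arbitrary), and dually for a single column. For the inductive step I would fix any value $x$ occurring in $A$, use Observation~\ref{obs:nfs-perm} to permute rows and columns at will---a labeling of the permuted matrix pulls back to $A$ by permuting the indices of $(y_i)$ and $(z_j)$---and apply Lemma~\ref{lem:ind-fs} to obtain the staircase form $b_{ij}$, the non-increasing function $s$, and the threshold $k$.

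When $k=m$, assertions \ref{ind-fs:cond1}, \ref{ind-fs:cond2}, \ref{ind-fs:cond4}, \ref{ind-fs:cond5} are vacuous and assertion~\ref{ind-fs:cond3} already suffices: in each row $i$ the entries with $s(j)<i$ all coincide while the others equal $x$, so $z_j:=x$ for all $j$ and $y_i:=$ the common value of row $i$ (or $x$ for an all-$x$ row) covers every cell. When $k<m$, I would read the labels off the two families of stripes: set $z_j:=x$ for $j\le k$ and $z_j:=b_{s(k+1),j}$ for $j>k$ (the vertical stripe of assertion~\ref{ind-fs:cond2}), and set $y_i:=b_{ik}$ for the bottom rows $i>s(k)$ (the horizontal stripe of assertion~\ref{ind-fs:cond3}) and $y_i:=x$ for the top rows. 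A region-by-region inspection then shows that these labels cover the $x$-region and both stripes outright, and---via assertion~\ref{ind-fs:cond5}, which expresses each relevant white-rectangle entry as $b_{ik}=y_i$ or $b_{s(k+1),j}=z_j$---most of the white rectangle as well. The part of the white rectangle of assertion~\ref{ind-fs:cond4} that assertion~\ref{ind-fs:cond5} does not reach is strictly smaller and free of forbidden rectangles, so the induction hypothesis provides a labeling of it.

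The main obstacle is the \emph{gluing}: the induction hypothesis only produces \emph{some} labeling of the leftover block, whereas its columns are already committed to the $z_j$ fixed by the upper stripe and its rows must not clash with the $y_i=b_{ik}$ forced by the left part. I expect to resolve this not by using the recursive labels verbatim but by proving directly, through $2\times 2$ instances of the no-forbidden-rectangle hypothesis, that in each leftover row every entry differing from its column label $z_j$ takes a single common value, which can then serve as that row's $y_i$. Two elementary comparisons drive this: pairing two leftover cells $(i,j)$ and $(i,j')$ of the same row against the stripe row $s(k+1)$ forces $b_{ij}=b_{ij'}$ unless the stripe values $z_j$ and $z_{j'}$ coincide, and pairing a leftover cell with a bottom row through column $k$ (where $b_{ik}=x$) forces any column with $b_{ij}\neq z_j$ to be row-determined throughout the bottom block. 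Chaining these observations shows that the exceptional entries of a leftover row collapse to one value, aligning the recursion with the already-fixed $z_j$; the resulting global $(y_i)$ and $(z_j)$ then satisfy $a_{ij}\in\{y_i,z_j\}$ everywhere.
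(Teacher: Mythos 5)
Your overall architecture (peel off the staircase via Lemma~\ref{lem:ind-fs}, label the stripes, recurse on the remaining block) is the same as the paper's, but the gluing step as you describe it does not go through, and the place where it fails is exactly where the paper does something different. The problem is your commitment $z_j:=b_{s(k+1),j}$ for \emph{all} $j>k$. That value is correct only for the columns carrying a genuine vertical stripe, i.e.\ those with $s(j)<s(k+1)$; for the other columns of the white rectangle one has $s(j)=s(k+1)$, hence $b_{s(k+1),j}=x$, and committing $z_j:=x$ there forces every non-$x$ entry of the white rectangle in such a column to be absorbed by its row label. Your proposed rescue --- that in each leftover row all entries differing from their column label coincide --- is then false. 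Concretely, take the $2\times 3$ matrix with first row $(x,x,x)$ and second row $(x,u,v)$, where $x,u,v$ are pairwise distinct: it has no forbidden rectangle (every rectangle has its edge in the first row repeated), one computes $s=(2,1,1)$, $k=1$, $s(k+1)=1$, there are no vertical stripes, and your prescription yields $z_2=z_3=x$, after which row $2$ needs $y_2=u$ and $y_2=v$ simultaneously. A correct labelling exists ($z_2=u$, $z_3=v$), but it must come from the recursive call on the white rectangle, not from row $s(k+1)$.

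The paper avoids this by inverting the dependency: it first lets the induction hypothesis (taken on the number of distinct entries, which drops because $x$ is absent from the white rectangle --- your measure $n+m$ would also do) produce \emph{some} labelling of the white rectangle, and then \emph{normalises} it, replacing the label of every constant row (resp.\ column) of the white rectangle by that constant value. The only cells where the inner labels could clash with the outer stripes are those lying in a row of the white rectangle and a column of a $z$-stripe (and symmetrically); there your $2\times 2$ argument is sound and shows that a disagreement with $z_j$ forces the whole row of the white rectangle to be constant --- and the normalisation then makes $y_i$ equal to that constant. So the collapse argument is the right tool for the stripe columns only; for the interior of the white rectangle you must keep the recursive labels, suitably normalised, rather than overriding them with row-$s(k+1)$ values.
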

\begin{corollary}\label{cor:num-fs}
$n\times m$ matrices without forbidden rectangles have at most $n+m$ entries.
\end{corollary}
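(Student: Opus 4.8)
The plan is to read off the corollary directly from Lemma~\ref{lem:or}, interpreting ``entries'' as the number of \emph{distinct} values occurring in the matrix (an $n\times m$ matrix has $nm$ cells, so the bound $n+m$ can only refer to distinct values). First I would invoke Lemma~\ref{lem:or}: since $A$ has no forbidden rectangle, there exist $(y_1,\dots,y_n)\in X^n$ and $(z_1,\dots,z_m)\in X^m$ such that for all $i,j$ we have $a_{ij}=y_i$ or $a_{ij}=z_j$.

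Next I would observe that this immediately places every entry of $A$ inside the finite set $Y:=\{y_1,\dots,y_n\}\cup\{z_1,\dots,z_m\}$. Indeed, for each cell $(i,j)$ the value $a_{ij}$ is one of $y_i$ or $z_j$, both of which belong to $Y$. Hence the set of distinct values appearing in $A$ is a subset of $Y$.

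Finally I would bound the cardinality: $|Y|\leq |\{y_1,\dots,y_n\}| + |\{z_1,\dots,z_m\}| \leq n+m$, so the number of distinct entries of $A$ is at most $n+m$, as claimed. There is no real obstacle here; the corollary is a one-line consequence of Lemma~\ref{lem:or}, with all the combinatorial work already discharged in that lemma. The only point worth stating explicitly is the reading of ``entries'' as distinct values, which makes the statement meaningful.

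\begin{proof}
By Lemma~\ref{lem:or}, choose $(y_1,\dots,y_n)\in X^n$ and $(z_1,\dots,z_m)\in X^m$ with $a_{ij}=y_i$ or $a_{ij}=z_j$ for all $1\leq i\leq n$ and $1\leq j\leq m$. Then every entry of $A$ lies in $\{y_1,\dots,y_n\}\cup\{z_1,\dots,z_m\}$, a set of cardinality at most $n+m$. So $A$ has at most $n+m$ distinct entries.
\end{proof}
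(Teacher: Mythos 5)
Your proof is correct and is exactly the intended derivation: the paper states that Corollary~\ref{cor:num-fs} follows straightforwardly from Lemma~\ref{lem:or}, and your argument (every entry lies in $\{y_1,\dots,y_n\}\cup\{z_1,\dots,z_m\}$, a set of size at most $n+m$) is that one-line consequence, with the right reading of ``entries'' as distinct values.
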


Note that the necessary condition from Lemma~\ref{lem:or} is not sufficient. Indeed the property holds for the forbidden rectangle above (on the right-hand side) with vectors $(x,x)$ and $(y,y)$.

Definition~\ref{def:scd} below gives two inductive constructors (or decompositions) of games. Then Theorem~\ref{thm:fr-d}, proved in the appendix, shows that these constructors suffice to build all the matrices without forbidden rectangles. Note that it also implies Corollary~\ref{cor:num-fs} easily. Theorem~\ref{thm:fr-d} will probably turn out to be a more accurate inductive description than Lemma~\ref{lem:ind-fs}.
\begin{definition}[Stripe and corner decomposition]\label{def:scd}
If a two-player game structure $G$ can be decomposed, up to permutations of players, rows and columns, into the left-hand (resp. right-hand) structure below, it is said to be stripe-decomposable (resp. corner-decomposable), where $x$, $y$ and $z$ are pairwise distinct outcomes, and where $x/z$ means either $x$ or $z$.
\[\begin{array}{c@{\hspace{2cm}}c}
\begin{array}{|c|c|c|c|}
	  \cline{1-4}
	  x &  x & \dots & x\\
	  \cline{1-4}
	  \multicolumn{4}{|c|}{}\\
	  \multicolumn{4}{|c|}{G'}\\
	  \multicolumn{4}{|c|}{}\\
	  \cline{1-4}
	 \end{array}
&
\begin{array}{|c|c|c|c|c|c|}
	  \cline{1-6}
		x & x & z & x/z & \dots & x/z\\
	  \cline{1-6}
		x & y & y & x/y & \dots & x/y\\
	  \cline{1-6}
		z & y & z & y/z & \dots & y/z\\
	  \cline{1-6}
		x/z & x/y & y/z & \multicolumn{3}{c|}{}\\
	  \cline{1-3}
  	 \vdots & \vdots & \vdots & \multicolumn{3}{c|}{G'}\\
	  \cline{1-3}
		x/z & x/y & y/z & \multicolumn{3}{c|}{}\\
	  \cline{1-6}
	 \end{array}
\end{array}\]
\end{definition}

\begin{theorem}\label{thm:fr-d}
Every game structure without forbidden rectangles is stripe-decomposable or corner-decomposable.
\end{theorem}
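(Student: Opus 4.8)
The plan is to prove Theorem~\ref{thm:fr-d} by strong induction on the number of rows plus columns of the game structure $G$, using Lemma~\ref{lem:ind-fs} as the structural engine. The base case (a $1\times m$ or $n\times 1$ matrix) is trivially stripe-decomposable, since a single row is just a row of possibly-varying entries and can be peeled off. For the inductive step, I would pick any outcome $x$ occurring in $G$ and apply Lemma~\ref{lem:ind-fs}, which (after permuting rows and columns, allowed by Observation~\ref{obs:nfs-perm}) puts $G$ into the staircase normal form with the $x$-region in the top-left, the $y_i$-stripes, the $z_j$-stripes, and a forbidden-rectangle-free white submatrix $G'$ in the bottom-right corner. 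The threshold index $k$ from the lemma is what controls which of the two decompositions applies.

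\textbf{Case analysis on the staircase.} First I would dispose of the degenerate case where the staircase is "flat", i.e. where the top row consists entirely of the single entry $x$ (equivalently $s(j)\geq 1$ for all $j$, so $x$ fills the whole first row). Then $G$ is immediately stripe-decomposable: the top row is the all-$x$ stripe and $G'$ is the remaining rows, which inherits the no-forbidden-rectangle property as a submatrix. Otherwise the staircase genuinely descends, and I would read off from Lemma~\ref{lem:ind-fs} the first two columns of the corner pattern in Definition~\ref{def:scd}: conditions~\ref{ind-fs:cond2} and~\ref{ind-fs:cond3} force the $y$-stripe and $z$-stripe constancy, condition~\ref{ind-fs:cond1} gives the strict drop $s(k+1)<s(k)$ that creates the $2\times 2$ corner block $\left(\begin{smallmatrix}x&z\\y&z\end{smallmatrix}\right)$-type pattern with the distinguished $x$, $y$, $z$ entries, and condition~\ref{ind-fs:cond5} is exactly what yields the $x/z$, $x/y$, $y/z$ hybrid entries in the border of the corner template. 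The remaining interior block is the submatrix of condition~\ref{ind-fs:cond4}, which has no forbidden rectangle and is strictly smaller, so the induction hypothesis applies to it as the $G'$ of the corner decomposition.

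\textbf{Matching entries to the template.} The genuinely fiddly part is checking that the three outcomes $x$, $y$, $z$ produced are pairwise distinct as Definition~\ref{def:scd} demands, and that every border entry really lands in the allowed $\{x,z\}$, $\{x,y\}$, or $\{y,z\}$ set rather than taking a fourth value. Distinctness of $x$ from the stripe values follows because the stripe rows/columns lie outside the $x$-region of the staircase; distinctness of $y$ from $z$ (and the border constraints) is where I expect to lean hardest on condition~\ref{ind-fs:cond5} together with a direct no-forbidden-rectangle argument on the relevant $2\times 2$ subrectangles, ruling out any entry that would create a rectangle all of whose four edges carry distinct outcomes. If $y=z$ happened to collapse, the corner pattern would degenerate and one checks it reduces to the stripe case instead, so the two constructors between them still cover $G$.

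\textbf{Main obstacle.} The principal difficulty is bookkeeping: Lemma~\ref{lem:ind-fs} is stated as an implication, not an equivalence, and its output is a rather rigid but intricate normal form, so the real work is a careful translation of its five numbered conditions into the precise entry-by-entry layout of the corner-decomposition template, while simultaneously verifying the distinctness hypotheses on $x,y,z$ and confirming that the residual block $G'$ is both strictly smaller and forbidden-rectangle-free. Once the dictionary between the lemma's indices and the template is pinned down, the induction closes routinely; but getting every border cell and every stripe to fall into its designated slot, and correctly identifying which outcomes play the roles of $x$, $y$, $z$, is where the proof must be written with care.
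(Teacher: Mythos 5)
Your strategy --- normalise $G$ via Lemma~\ref{lem:ind-fs} around a chosen outcome $x$ and read the corner template off the staircase --- is genuinely different from the paper's proof, which instead peels off the lowest row, applies the induction hypothesis to the remaining structure, and case-analyses how the removed row extends a stripe- or corner-decomposition of that smaller structure. Your route has two concrete gaps. First, the residual block you propose to feed to the induction hypothesis is not the $G'$ of the corner template. The corner decomposition demands exactly three distinguished rows and columns, a $3\times 3$ top-left block with three \emph{pairwise distinct} outcomes in a fixed arrangement, and border rows and columns confined to the two-element sets $\{x,z\}$, $\{x,y\}$, $\{y,z\}$; the staircase of Lemma~\ref{lem:ind-fs} can have arbitrarily many steps, arbitrarily many pairwise distinct stripe values $y_i$ and $z_j$, and its white rectangle occupies rows $s(k{+}1){+}1,\dots,n$ and columns $k{+}1,\dots,m$ --- a position and shape that in general do not match ``everything past row and column three''. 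You give no procedure for collapsing a multi-step, multi-value staircase into this rigid template, and the paper's own warning that Lemma~\ref{lem:ind-fs} is only an implication (it exhibits a matrix satisfying all five assertions yet containing a forbidden rectangle) shows that the five conditions cannot by themselves force the template: the extra forbidden-rectangle arguments you defer as ``bookkeeping'' are in fact the substance of the proof.

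Second, the outcome $x$ cannot be ``any outcome occurring in $G$''. The distinguished outcome of the corner template occurs at least three times (at positions $(1,1)$, $(1,2)$, $(2,1)$ after permutation), so an outcome occurring only once --- for instance only inside the residual block of an otherwise corner-decomposed structure, which is easy to realise in a $4\times 4$ matrix without forbidden rectangles --- can never play that role, and the staircase built around it is a single cell from which no corner can be read off. Your argument needs either a rule for selecting a suitable $x$ or a fallback when the chosen one fails, and neither is supplied. The base case and the flat-staircase (stripe) branch are fine; it is the inductive corner branch where the argument is missing rather than merely fiddly.
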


\section{Weakly-terminating (maximising) improvement}\label{sect:wtmi}

To every finite game one can associate a Markov (resp. maximising Markov) chain with strategy profiles as states, and where positive transition probabilities correspond exactly to improvements (resp. maximising improvements) or self-loops. Observation~\ref{obs:asc-mc} below means that a game is weakly acyclic iff some/any associated Markov chain converges towards NE almost surely. Note that such processes are memoryless and therefore simple to implement, which justifies the notion of weak acyclity. Observation~\ref{obs:asc-mc} follows from basic probability theory and therefore must be known already. 

\begin{observation}\label{obs:asc-mc}
Given a game with acyclic preferences, the assertions below are equivalent.
\begin{enumerate}
\item Improvement (resp. maximising improvement) in the game is weakly terminating.
\item The NE are exactly the profiles with (possibly) positive stationary measure for some/any associated Markov chain (resp. maximising Markov chain)
\end{enumerate}
\end{observation}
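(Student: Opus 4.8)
The plan is to set up a dictionary between the game and any associated (maximising) Markov chain, and then read off both implications from elementary finite-Markov-chain theory. Fix such a chain with transition matrix $P$. By construction its positive transitions are exactly the (maximising) improvement edges together with self-loops, so a state is reachable from a profile with positive $P$-probability iff (ignoring self-loops) it is reachable along (maximising) improvement. Consequently the directed reachability structure, and hence which states are recurrent, depends only on the improvement relation and not on the actual transition values, which is what makes the ``some/any'' phrasing legitimate. Two translations are central and I would record them first: a profile is an NE iff it is a terminal state of $\twoheadrightarrow$ iff it is \emph{absorbing} in the chain, where for the maximising case I use that a player has a maximising-improvement move exactly when she has any improvement move at all, so the terminal states of both relations coincide with the NE; and, for finite chains, a profile carries positive mass under some stationary distribution iff it is \emph{recurrent}, every transient profile having mass zero under every stationary distribution.

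For the implication $(1)\Rightarrow(2)$ I would argue as follows. An NE is absorbing, hence recurrent, hence carries positive stationary mass, so every NE lies in the support. Conversely, let $s$ be a non-NE profile. Weak termination supplies a (maximising) improvement path from $s$ to some NE $e$; in the chain this is a positive-probability path to $e$, and $e\neq s$ since $e$ is an NE while $s$ is not. Because $e$ is absorbing, $s$ is not reachable from $e$, so $s$ is transient and has zero stationary mass. Hence the profiles with positive stationary measure are exactly the NE.

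For $(2)\Rightarrow(1)$ I would proceed by contraposition. Suppose the (maximising) improvement is not weakly terminating, so there is a profile $s_0$ from which no (maximising) improvement path reaches an NE. In a finite chain the set of recurrent states is reached from $s_0$ with probability one, so some recurrent state $r$ is reachable from $s_0$ with positive probability, i.e. along a (maximising) improvement path. Were $r$ an NE, that path would witness weak termination from $s_0$, a contradiction; hence $r$ is a non-NE profile which, being recurrent, carries positive stationary mass. This contradicts the equality asserted in $(2)$ and completes the contraposition.

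The only real content is the dictionary itself: once absorbing $\leftrightarrow$ NE and recurrent $\leftrightarrow$ positive stationary mass are in place, each implication collapses to a one-line consequence of standard theory. The step most worth stating carefully, and the one I expect to be the main (if modest) obstacle, is the matching of positive-probability reachability in the chain with improvement-reachability in the game up to self-loops, since this is exactly what lets weak termination interact with absorption and recurrence, and it is also what guarantees that all associated chains share one and the same recurrence structure, making the ``some/any'' quantifier unambiguous.
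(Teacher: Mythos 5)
Your proof is correct. The paper offers no proof of this observation at all (it is dismissed with ``follows from basic probability theory and therefore must be known already''), and the argument it implicitly has in mind is exactly the dictionary you set up: NE $\leftrightarrow$ absorbing states (using finiteness and acyclicity so that maximising moves exist whenever improving moves do), positive stationary mass $\leftrightarrow$ recurrence, and the observation that recurrence depends only on the positive-transition graph, which justifies the ``some/any'' quantifier; both implications then follow as you describe.
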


It is also easy to prove weak acyclicity in two-player antagonist (\textit{e.g.} zero-sum) games.

\begin{observation}\label{obs:zs-wa}
If a two-player game with antagonist preferences has an NE, it is weakly acyclic in at most three steps.
\begin{proof}
Let $y$ be the value of the game. If the starting profile involves one optimal strategy of one player and one non-optimal strategy of the opponent, letting the opponent choose an optimal strategy yields an NE. Let us now assume that the starting profile involves no optimal strategy. If it yields an outcome that is worse than $y$ for one player, letting this player choose an optimal strategy reduces to the previous case; if it yields $y$, letting one player make an improvement step reduces to the now-previous case.  
\end{proof}
\end{observation}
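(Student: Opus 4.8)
The plan is to exploit the standard structure of antagonist games once a Nash equilibrium (hence a saddle point) is known to exist. First I would fix the value $y$ of the game, together with the sets $O_a$ and $O_b$ of optimal strategies of the two players, and record the three facts that drive everything: an optimal strategy of $a$ secures, against every strategy of $b$, an outcome at least as good as $y$ in $a$'s preference; symmetrically an optimal strategy of $b$ secures an outcome at most as good as $y$; and, by interchangeability of saddle points, the set of Nash equilibria is exactly $O_a\times O_b$, every such profile having outcome $y$. Since the game has an NE, both $O_a$ and $O_b$ are nonempty, so the targets of the improvement paths are available.

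Next I would argue by cases on a starting profile $s=(s_a,s_b)$ according to its value $c$ and to which of its coordinates are optimal. If both coordinates are optimal, $s$ is already an NE. If $c\neq y$, then one player is strictly worse off than the value; that player can switch to an optimal strategy, which is a strict improvement and yields a profile with that coordinate optimal. From there the opponent switches to an optimal strategy; both coordinates being now optimal, interchangeability places us in $O_a\times O_b$, i.e.\ at an NE. This disposes of the profiles with $c\neq y$ in at most two steps.

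The remaining, and genuinely delicate, case is $c=y$ with $s$ not an NE, so that exactly one coordinate (or neither) is optimal. Here the naive ``switch to an optimal strategy'' move is only a tie, not a strict improvement, so it is not directly available; this is the step I expect to be the main obstacle. The plan is to first let the player who can still strictly improve perform one exploiting step, which moves the value strictly off $y$, thereby reducing to the already-treated case $c\neq y$ and accounting for the third step. The crux is then to show that this single exploiting step, followed by the two ``towards-optimal'' steps, can always be arranged so that the final profile has \emph{both} coordinates optimal within the three-move budget. Controlling the choice of the intermediate strategies here, using interchangeability so that $O_a\times O_b$ is actually reached rather than merely a profile of value $y$ with one non-optimal coordinate, is where the bound of three (rather than two) is needed and where the argument must be made carefully.
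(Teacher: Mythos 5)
You follow the same route as the paper's proof --- fix the value $y$ and the optimal sets $O_a,O_b$, use interchangeability of saddle points, and split on which coordinates are optimal and whether the current outcome is $y$ --- but the proposal is not a proof: the case you yourself call ``the crux'' (outcome equivalent to $y$ at a profile with a non-optimal coordinate) is left open, and it is precisely where the argument breaks. The obstruction is that switching to an optimal strategy from an outcome $\sim y$ is only a tie, hence not a legal step of $\twoheadrightarrow$ as defined in Definition~\ref{defn:asyn-improv}. Moreover the same obstruction silently reappears inside your ``two steps suffice when $c\neq y$'' case: after the strictly-losing player moves to an optimal strategy, the new outcome may already be $\sim y$ while the opponent's coordinate is still non-optimal, so the promised second switch is again not a strict improvement. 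Your plan of one exploiting step to push the value off $y$ followed by two towards-optimal steps therefore cannot be completed as stated, because the towards-optimal steps can land back on an outcome $\sim y$ with a non-optimal coordinate.

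The gap is not merely a matter of choosing the intermediate strategies more carefully. Consider the zero-sum game (row player maximises)
\[
\begin{array}{|c@{\;\vline\;}c@{\;\vline\;}c|}
\cline{1-3}
0 & 0 & 0\\
\cline{1-3}
0 & 1 & -1\\
\cline{1-3}
0 & -2 & 2\\
\cline{1-3}
\end{array}
\]
whose unique NE is the top-left entry; its row and its column are constant, so no strict improvement step of either player can ever \emph{enter} that profile, and the game is not weakly acyclic at all, let alone in three steps. So under strict improvement no bound on the number of moves can be extracted from this case analysis; one must either allow indifferent switches to optimal strategies or exclude ties with the value outside $O_a\times O_b$. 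For what it is worth, the paper's own one-line argument makes exactly the same silent non-strict move (``letting the opponent choose an optimal strategy''), so you have correctly located the weak point --- but you have not repaired it, and in the stated generality it cannot be repaired.
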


Whereas the FIP is closed under taking subgames, it is easy to see that weak acyclicity is not. On the other hand, single-profile subgames are of course weakly acyclic under irreflexive preferences, so the next interesting question is whether all non-trivial weakly acyclic games (with acyclic preferences) have a maximal proper subgame that is also weakly acyclic. The answer is no, as shown by the left-hand game below, where empty cells represent the payoffs $(0,0)$. The game is weakly acyclic with one unique NE at the bottom-right corner. However, deleting one strategy of either of the players yields a game that is not weakly acyclic.

\begin{tabular}{c@{\hspace{1cm}}c}
$\begin{array}{|c@{\;\vline\;}c@{\;\vline\;}c@{\;\vline\;}c@{\;\vline\;}c@{\;\vline\;}c@{\;\vline\;}c@{\;\vline\;}c@{\;\vline\;}c@{\;\vline\;}c|}
	\cline{1-10}
	1,3 & 3,1 &&&&&& 5,1 &&\\
	\cline{1-10}
	2,1 & 1,3 & 3,1 &&&5,1&&&&\\
	\cline{1-10}
	& 2,1 & 1,3 & 3,1 &&6,4&4,6&&&\\
	\cline{1-10}
	&&2,1 & 1,3 & 3,1 &4,6&6,4&&&\\
	\cline{1-10}
	&&&2,1 & 1,3 & 3,1 && 6,4& 4,6&\\
	\cline{1-10}
	&&&& 2,1 & 1,3 & 3,1 &4,6&6,4&\\
	\cline{1-10}
	&&&&& 2,1 & 1,3 & 3,1 &&\\
	\cline{1-10}
	&&&&&& 2,1 & 1,3 & 3,1 &\\
	\cline{1-10}
	&&&&&&& 2,1 & 1,3 & 0,3\\
	\cline{1-10}
\end{array}$
&
$\begin{array}{|c@{\;\vline\;}c@{\;\vline\;}c@{\;\vline\;}c|}
	\cline{1-4}
	x & y & z & z  \\
	\cline{1-4}
	y & x & z & z  \\
	\cline{1-4}
	z & z & x & z \\
	\cline{1-4}	
	z & z & z & z \\
	\cline{1-4}	
\end{array}$
\end{tabular}

But the answer is yes when considering the maximising improvement in two-player games, as shown by Proposition~\ref{prop:wa-mi-sg} below. This provides a means to use induction on the number of strategies to prove properties of games with weakly terminating maximising improvement.

\begin{proposition}\label{prop:wa-mi-sg}
Let $g = \langle (S_1,S_2),O,v,(<_1,<_2)\rangle$ be a finite two-player game where $<_1$ and $<_2$ are linear orders and maximising improvement is weakly terminating. If $1<|S_1|,|S_2|$, there exists a strategy $s_i\in S_i$ such that maximising improvement is also weakly terminating in $g' := \langle (S'_1,S'_2),O,v',(<_1,<_2)\rangle$, where $S'_i := S_i\backslash\{s_i\}$ and $S'_{3-i} := S_{3-i}$ and $v' := v|_{S'_1\times S'_2}$.
\begin{proof}
Let us make a case disjunction. First case, let us assume that every profile of $g$ either is an NE or leads to an NE in one (maximising) improvement step, and let us make a nested case disjunction. First sub-case, some strategy $s_i\in S_i$ of some player $i$ is involved in no NE of $g$, so $s_i$ witnesses the claim since every profile that does not involve $s_i$ is either an NE (of $g$ and therefore of $g'$) or leads in one improvement step to an NE that does not involve $s_i$. Second sub-case, every strategy is involved in an NE. Let $s_i \in S_i$ with $i = 1$ and let $s'$ be a profile of $g'$, \textit{i.e.} such that $s'_1 \neq s_1$. If player $1$ can do a maximising improvement step from $s'$ in $g'$, let her do it and let us rename the new profile $s'$. If player $2$ cannot improve from $s'$, it is an NE in $g'$; otherwise recall that by assumption there exists $s_2 \in S_2$ such that $(s'_1,s_2)$ is an NE in $g$. Since $v(s') <_2 v(s'_1,s_2)$, this witnesses an improvement step from $s'$ to an NE.

Second case, let $s$ be a profile such that the shortest maximising improvement paths from $s$ to an NE are as long as possible in $g$, and let $s\to t\to\dots\to u$ be such a path. Note that the minimality condition ensures that the sequence involves each strategy at most once, possibly besides at the end points $s$ and $u$. Let us assume that, say, $t_2 = s_2$ and prove that the maximising improvement is weakly terminating in $g' = \langle (S_1,S_2\backslash\{s_2\}),O,v,(<_1,<_2)\rangle$. Let $t^1$ be a profile in $g'$ and let $t^1\to\dots\to t^n$ be a shortest maximising improvement path from $t^1$ to an NE in $g$. Let us make a three-fold nested case disjunction: first sub-case, $t^k_2 \neq s_2$ for all $k$, so $t^1\to\dots\to t^n$ is also a shortest maximising improvement path from $t^1$ to an NE in $g'$; second sub-case, $t^k_2 = s_2$ for some $1 < k < n$. If $s\to t\to\dots\to u$ is longer than $t^k\to\dots\to t^n$, the maximising improvement path $s\to t^{k+1}\to\dots\to t^n$ contradicts the minimality of $s\to t\to\dots\to u$, and if $s\to t\to\dots\to u$ is not longer than $t^k\to\dots\to t^n$, then $t^1\to\dots\to t^n$ contradicts the maximality of $s\to t\to\dots\to u$; third sub-case, $t^n_2 = s_2$, so $s\to t^n$ is a path from $s$ to an NE, which brings us back to the first case of the main case disjunction.
\end{proof}
\end{proposition}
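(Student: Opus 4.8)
The plan is to split on how quickly the maximising improvement in $g$ reaches a Nash equilibrium from everywhere. Call the \emph{depth} of a profile the length of a shortest maximising-improvement path from it to an NE; this is finite since maximising improvement is weakly terminating. The first case is that every depth is at most one, i.e.\ every profile is an NE or one step from one. Here I would split again. If some player $i$ owns a strategy $s_i$ occurring in no NE, I delete $s_i$: a profile of $g'$ that is an NE of $g$ stays an NE of $g'$ (fewer strategies can only destroy improving moves), and a non-equilibrium profile of $g'$ has a one-step maximising improvement in $g$ to an NE, which necessarily avoids $s_i$ and so still takes place in $g'$. If instead every strategy occurs in some NE, I delete an arbitrary $s_1\in S_1$ and show every profile $s'$ of $g'$ reaches an NE: first let player $1$ take her maximising step in $g'$ if she has one, so that she sits at a best response of $g'$; then either player $2$ has no improving move in $g'$, making $s'$ an NE of $g'$, or the hypothesis supplies an NE $(s'_1,s_2)$ of $g$, reached from $s'$ by player $2$'s maximising step and avoiding $s_1$.

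The core is the second case, where some profile has depth at least two. I choose a profile $s$ of maximal depth $D$ and a shortest maximising-improvement path $s\to t\to\dots\to u$ to an NE. After exchanging the two players if necessary, the first step may be taken to be player $1$'s, so that $s$ and $t$ share player $2$'s coordinate, which I call $s_2$; this is the strategy I delete. To prove weak termination of $g'$, take an arbitrary profile $t^1$ of $g'$ (so $t^1_2\neq s_2$) and a shortest maximising path $t^1\to\dots\to t^n$ to an NE of $g$, and case on where $s_2$ appears along it. If $s_2$ never appears, the path survives in $g'$ verbatim, using the elementary but essential fact that deleting a strategy that is nobody's best response leaves all best responses unchanged, so every player-$2$ step of the path (which lands off $s_2$) is still a maximising step of $g'$.

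The remaining possibilities are that $s_2$ occurs at the very end $t^n$, or at some interior $t^k$. For the interior case I take $k$ to be the \emph{first} index with $t^k_2=s_2$; since $t^1_2\neq s_2$, player $2$ must have just moved into $s_2$ as a best response against $t^k_1$, so player $2$ is satisfied at $t^k$ and the next step $t^k\to t^{k+1}$ is forced to be player $1$'s. Because player $1$'s best response depends only on the opponent's coordinate, that step sends player $1$ to the same target as the first step out of $s$, giving $t^{k+1}=t$ and hence a legitimate spliced path $s\to t^{k+1}\to\dots\to t^n$ of length $n-k$. Comparing lengths forces a contradiction: the spliced path shows $D\le n-k$, while maximality of $D$ gives $n-k=d(t^k)\le D$, so $D=n-k\le n-2$; but $t^1\to\dots\to t^n$ is shortest, so $d(t^1)=n-1>D$, contradicting maximality of $D$. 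The end case $t^n_2=s_2$ is simpler: it yields a one-step path $s\to t^n$ to an NE, collapsing the depth of $s$ and contradicting $D\ge 2$.

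I expect the interior sub-case to be the main obstacle, for two reasons. First, the splice $s\to t^{k+1}$ is legitimate only because of the careful choice of $k$ as the first occurrence, which forces a player-$1$ move, together with the determinism of best responses under linear orders; a generic occurrence of $s_2$ need not admit such a splice, so the extremal choice of $s$ and the first-occurrence bookkeeping are doing real work. Second, every reduction and contradiction rests on the best-response-preservation fact, which must be invoked repeatedly to certify that a move valid in $g$ remains a maximising move in the reduced game $g'$, and on keeping the path-length counts exactly consistent when passing to suffixes and spliced prefixes.
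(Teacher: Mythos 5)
Your proof is correct and follows essentially the same route as the paper's: the same two-level case disjunction (every profile at depth at most one versus a profile of maximal depth), the same deletion of the strategy shared by the first step of a longest shortest path, and the same splice-and-compare-lengths contradictions in the three sub-cases. The only divergence is that you pin down $k$ as the \emph{first} occurrence of $s_2$ to certify that the spliced step $s\to t^{k+1}$ is a legitimate player-1 maximising move --- a detail the paper's own proof leaves implicit --- at the cost of a slightly overstated claim that $t^{k+1}=t$ (best responses need not be unique as strategies, only as outcomes, but the splice is valid either way).
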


Instead of mere games, let us now consider a structure that always yields a weakly acyclic game when instantiated with payoffs $(2,0)$, $(1,1)$, and $(0,2)$. Along the lines of \cite{SLR14}, another interesting question is whether weak acyclicity always holds when the structure is instantiated with arbitrary acyclic preferences. It holds for all antagonist preferences by Observation~\ref{obs:zs-wa} and the transfer theorem from \cite{SLR14}, but it fails for some acyclic preferences, as shown by the right-hand structure above. All antagonist preferences yield a weakly acyclic game, but if $z <_a y <_a x$ and $z <_b x <_b y$, improvement get stuck in a cycle in the upper-left corner.

\subparagraph*{Acknowledgements}

I thank Dietmar Berwanger, Yvan Le Borgne, and Victor Poupet for useful discussions and comments.

\section{Future work}

This article suggests a few natural directions for future work. First, most of the results obtained here involve two-player games only, so it would be interesting to generalise these for multi-player games. Second, the forbidden patterns led to a sufficient condition ensuring the FIP, in Corollary~\ref{cor:fp-term}, which was almost necessary but still had a gap. It would be interesting to see whether considering patterns bigger than $2 \times 2$ subgames could enable a meaningful refinement of Corollary~\ref{cor:fp-term} and Theorem~\ref{thm:sfp-ne}. Third, similar to Theorem~\ref{thm:pot-gs}, it would be interesting to characterise, or at least approximate precisely, the (even two-player) game structures that always yield weakly acyclic games however they may be instantiated with reasonable preferences.

\bibliography{article}

\newpage

\appendix

\begin{proof}[Proof of Observation~\ref{obs:comp-pat}]
The following tree describes the nested case disjunctions. In each diagram Players $a$ and $b$ can move vertically and horizontally, respectively. Unnecessary or retrievable information is not displayed, which hopefully makes the diagrams easier to read.

\begin{tikzpicture}[level distance=25mm]
\node{\begin{tikzpicture}[node distance=1cm]
		\node(x){};
		\node(y)[below of = x]{};
		\node(z)[right of = y]{};
		\node(t)[right of = x]{};
		\draw [->>] (x) -- (y);
		\draw [->>] (y) -- (z);
		\end{tikzpicture}
		}[sibling distance=50mm]
			child{node{\begin{tikzpicture}[node distance=1cm]
							\node(x){};
							\node(y)[below of = x]{};
							\node(z)[right of = y]{};
							\node(t)[right of = x]{};
							\draw [->>] (x) -- (y);
							\draw [->>] (y) -- (z);
							\draw [->>] (z) -- (t);
							\end{tikzpicture}
							}[sibling distance=25mm]
						child{node{\begin{tikzpicture}[node distance=1cm]
										\node(x){};
										\node(y)[below of = x]{};
										\node(z)[right of = y]{};
										\node(t)[right of = x]{};
										\draw [->>] (x) -- (y);
										\draw [->>] (y) -- (z);
										\draw [->>] (z) -- (t);
										\draw [dashed] (x) to node[above, midway] {$\preceq_b$} (t);
										\end{tikzpicture}
										}[sibling distance=25mm]
									child{node{\begin{tikzpicture}[node distance=1cm]
													\node(x){};
													\node(y)[below of = x]{};
													\node(z)[right of = y]{};
													\node(t)[right of = x]{};
													\draw [->>] (x) -- (y);
													\draw [->>] (y) to node[below] {CS} (z);
													\draw [dashed] (x) to node[above, midway] {$\preceq_b$} node[below, midway] {$\succeq_a$} (t);
													\end{tikzpicture}
													}
									}
									child{node{\begin{tikzpicture}[node distance=1cm]
													\node(x){};
													\node(y)[below of = x]{};
													\node(z)[right of = y]{};
													\node(t)[right of = x]{};
													\draw [->>] (x) -- (y);
													\draw [->>] (y) to node[below] {PUB} (z);
													\draw [->>] (z) -- (t);
													\draw [dashed] (x) to node[above, midway] {$\preceq_b$} node[below, midway] {$\prec_a$} (t);
													\end{tikzpicture}
													}
									}
						}
						child{node{\begin{tikzpicture}[node distance=1cm]
										\node(x){};
										\node(y)[below of = x]{};
										\node(z)[right of = y]{};
										\node(t)[right of = x]{};
										\draw [->>] (x) -- (y);
										\draw [->>] (y) to node[below] {EC} (z);
										\draw [->>] (z) -- (t);
										\draw [->>] (t) -- (x);
										\end{tikzpicture}
										}
						}
			}
			child{node{\begin{tikzpicture}[node distance=1cm]
							\node(x){};
							\node(y)[below of = x]{};
							\node(z)[right of = y]{};
							\node(t)[right of = x]{};
							\draw [->>] (x) -- (y);
							\draw [->>] (y) -- (z);
							\draw [dashed] (t) to node[below, midway, rotate=90] {$\succeq_a$} (z);
							\end{tikzpicture}
							}[sibling distance=25mm]
						child{node{\begin{tikzpicture}[node distance=1cm]
										\node(x){};
										\node(y)[below of = x]{};
										\node(z)[right of = y]{};
										\node(t)[right of = x]{};
										\draw [->>] (x) -- (y);
										\draw [->>] (y) to node[below] {CA} (z);
										\draw [dashed] (t) to node[below, midway, rotate=90] {$\succeq_a$} node[above, midway, rotate=90] {$\preceq_b$}(z);
										\end{tikzpicture}
										}[sibling distance=25mm]
						}
						child{node{\begin{tikzpicture}[node distance=1cm]
										\node(x){};
										\node(y)[below of = x]{};
										\node(z)[right of = y]{};
										\node(t)[right of = x]{};
										\draw [->>] (x) -- (y);
										\draw [->>] (y) -- (z);
										\draw [dashed] (t) to node[below, midway, rotate=90] {$\succeq_a$} node[above, midway, rotate=90] {$\succ_b$}(z);
										\end{tikzpicture}
										}[sibling distance=25mm]
									child{node{\begin{tikzpicture}[node distance=1cm]
													\node(x){};
													\node(y)[below of = x]{};
													\node(z)[right of = y]{};
													\node(t)[right of = x]{};
													\draw [->>] (x) -- (y);
													\draw [->>] (y) -- (z);
													\draw [dashed] (t) to node[below, midway, rotate=90] {$\succeq_a$} (z);
													\draw [dashed] (x) to node[above, midway] {$\preceq_b$} (t);
													\end{tikzpicture}
													}
												child{node{\begin{tikzpicture}[node distance=1cm]
																\node(x){};
																\node(y)[below of = x]{};
																\node(z)[right of = y]{};
																\node(t)[right of = x]{};
																\draw [->>] (x) -- (y);
																\draw [->>] (y) -- (z);
																\draw [->>] (t) -- (z);
																\draw [->>] (x) -- (t);
																\end{tikzpicture}
																}
															child{node{\begin{tikzpicture}[node distance=1cm]
																			\node(x){};
																			\node(y)[below of = x]{};
																			\node(z)[right of = y]{};
																			\node(t)[right of = x]{};
																			\draw [->>] (x) -- (y);
																			\draw [->>] (y) to node[below] {PHC} (z);
																			\draw [->>] (t) -- (z);
																			\draw [->>] (x) -- (t);
																			\draw [dashed] (x) to node[below, midway, rotate=-45] {$\preceq_a$} node[above, midway, rotate=-45] {$\preceq_b$}(z);
																			\end{tikzpicture}
																			}
															}
															child{node{\begin{tikzpicture}[node distance=1cm]
																			\node(x){};
																			\node(y)[below of = x]{};
																			\node(z)[right of = y]{};
																			\node(t)[right of = x]{};
																			\draw [->>] (x) -- (y);
																			\draw [->>] (y) to node[below] {PUC} (z);
																			\draw [->>] (t) -- (z);
																			\draw [->>] (x) -- (t);
																			\draw [dashed] (x) to node[below, midway, rotate=-45] {$\succ_a$} (z);
																			\end{tikzpicture}
																			}
															}
															child{node{\begin{tikzpicture}[node distance=1cm]
																			\node(x){};
																			\node(y)[below of = x]{};
																			\node(z)[right of = y]{};
																			\node(t)[right of = x]{};
																			\draw [->>] (x) -- (y);
																			\draw [->>] (y) to node[below] {PUC} (z);
																			\draw [->>] (t) -- (z);
																			\draw [->>] (x) -- (t);
																			\draw [dashed] (x) to node[above, midway, rotate=-45] {$\succ_b$}(z);
																			\end{tikzpicture}
																			}
															}
												}
												child{node{\begin{tikzpicture}[node distance=1cm]
																\node(x){};
																\node(y)[below of = x]{};
																\node(z)[right of = y]{};
																\node(t)[right of = x]{};
																\draw [->>] (x) -- (y);
																\draw [->>] (y) to node[below] {IS} (z);
																\draw [dashed] (t) to node[below, midway, rotate=90] {$\succeq_a$} (z);
																\draw [dashed] (x) to node[above, midway] {$\sim_b$} (t);
																\end{tikzpicture}
																}
												}
												child{node{\begin{tikzpicture}[node distance=1cm]
																\node(x){};
																\node(y)[below of = x]{};
																\node(z)[right of = y]{};
																\node(t)[right of = x]{};
																\draw [->>] (x) -- (y);
																\draw [->>] (y) to node[below] {IA} (z);
																\draw [dashed] (t) to node[below, midway, rotate=90] {$\sim_a$} (z);
																\draw [dashed] (x) to node[above, midway] {$\preceq_b$} (t);
																\end{tikzpicture}
																}
												}
									}
									child{node{\begin{tikzpicture}[node distance=1cm]
													\node(x){};
													\node(y)[below of = x]{};
													\node(z)[right of = y]{};
													\node(t)[right of = x]{};
													\draw [->>] (x) -- (y);
													\draw [->>] (y) to node[below] {PUB} (z);
													\draw [dashed] (t) to node[below, midway, rotate=90] {$\succeq_a$} node[above, midway, rotate=90] {$\succ_b$}(z);
													\draw [->>] (t) -- (x);
													\end{tikzpicture}
													}
									}
						}
			};
\end{tikzpicture}
\end{proof}

\begin{proof}[Proof of Lemma~\ref{lem:strong-uniform-continuity}]
The function $f$ is continuous on a compact, so it is bounded, so wlog let us assume that its range is included in $[0,1]^2$. Let $I_k \,:=\, ]\frac{\epsilon(k-1)}{2},\frac{\epsilon(k+1)}{2}[$ for every natural number $k \leq 2\cdot\epsilon^{-1}$. By continuity every $f^{-1}[I_{l(1)}\times \dots \times I_{l(n)}]$ is open so it can be written as a union $\cup A_l$ of Cartesian products of open subsets of the $S_i$. Since the union $\cup_{l\in\{1,\dots,\lfloor 2\cdot\epsilon^{-1}\rfloor\}^n} A_l$ constitutes an open cover of $S$, by compactness it has a finite open subcover $\{B_{1j} \times\dots\times B_{nj}\,\mid\, 1\leq j\leq q\}$, and each $B_{1j} \times\dots\times B_{nj}$ must belong to at least one $A_l$. For every $i \leq n$ and $u\in\{0,1\}^q$ let $B_i^u \,:=\, \cap^q_{j=1} B^u_{ij}$ where $B^u_{ij} := B_{ij}$ if $u_j = 1$ and $B^u_{ij}:=X_i\backslash B_{ij}$ otherwise. By construction for all $u_1,\dots,u_n \in \{0,1\}^q$ the set $B^{u_1}_1\times\dots\times B^{u_n}_n$ belongs to at least one $A_l$, so for all $x,y\in B^{u_1}_1\times\dots\times B^{u_n}_n$, we have $|f(x)-f(y)| < \epsilon$ by definition of the intervals $I_k$. Let us finally delete the empty $B_i^u$ from their respective families and split some remaining $B^u_i$ to obtain $n$ partitions of a common cardinality $m$. 
\end{proof}

\begin{proof}[Proof of Lemma~\ref{lem:ind-fs}]
Let $A=(a_{ij})_{1\leq i\leq n\,\wedge 1\leq j\leq m}$ be a matrix and let $x$ occurs in $A$. Let $\preceq$ be the binary relation defined over the (indexes of the) columns of the matrix $A$ by $j\preceq j':=\forall i,\, a_{ij'}=x\Rightarrow a_{ij}=x$. The relation is a preorder by definition, and if moreover $A$ has no forbidden rectangle, it is total: if $a_{ij}=x$ and $a_{ij'}\neq x$ for some $i$, $j$, and $j'$, absence of forbidden rectangles implies that $a_{i'j'}= x\Rightarrow a_{i'j}= x$ for all $i'$, which shows that $\preceq$ is total. So let $\theta$ be a permutation of $\{1,\dots,m\}$ such that $\theta(j)\preceq\theta(j+1)$ for all $j<m$. Likewise, let $\varphi$ be a permutation of $\{1,\dots,n\}$ that rearranges the lines, and let define the matrix $B$ by $b_{ij}:=a_{\varphi(i)\theta(j)}$ for all $i$ and $j$. For $1\leq j\leq m$ let $s(j)$ be the number of occurrences of the entry $x$ in the $j$-th column of $B$, so that $1\leq s(1)$ since $x$ occurs in $A$. Graphically, the "$x$-area" of $B$ is the exactly the area above the staircase defined by $s$, as in the pictures above and below.

\begin{tabular}{cc}
\begin{tikzpicture}[scale=.7]
\draw[thick] (0,0) rectangle (10,10);

\draw[thick] (0,3) -- (1,3);
\draw[dashed] (1,3) -- (1,4) -- (2,4);
\draw[thick] (2,4) -- (2,5) -- (3,5) -- (3,6) -- (5,6) -- (5,8) -- (6,8);
\draw[dashed] (6,8) -- (6,9) -- (7,9);
\draw[thick] (7,9) -- (7,10);

\node at (2,8) {$x$};
\node at (9,.5) {$y_n$};
\node at (9,1.5) {$y_{n-1}$};
\node at (9,2.5) {$y_{n-2}$};
\node at (9,4.5) {$y_{i+3}$};
\node at (9,5.5) {$y_{i+2}$};
\node at (9,6.5) {$y_{i+1}$};
\node at (9,7.5) {$y_{i}$};
\node at (9,9.5) {$y_1$};

\draw[very thin] (0,1) -- (10,1);
\draw[very thin] (0,2) -- (10,2);
\draw[very thin] (1,3) -- (10,3);
\draw[very thin] (2,4) -- (10,4);
\draw[very thin] (3,5) -- (10,5);
\draw[very thin] (5,6) -- (10,6);
\draw[very thin] (5,7) -- (10,7);
\draw[very thin] (5,8) -- (10,8);
\draw[very thin] (7,9) -- (10,9);
\end{tikzpicture}
&
\begin{tikzpicture}[scale=.7]
\draw[thick] (0,0) rectangle (10,10);

\draw[thick] (0,3) -- (1,3);
\draw[dashed] (1,3) -- (1,4) -- (2,4);
\draw[thick] (2,4) -- (2,5) -- (3,5) -- (3,6) -- (5,6) -- (5,8) -- (6,8);
\draw[dashed] (6,8) -- (6,9) -- (7,9);
\draw[thick] (7,9) -- (7,10);

\node at (2.5,.5) {$y_n$};
\node at (2.5,2.5) {$y_r$};
\node at (3.5,2.5) {$b_{rc'}$};
\node at (2.5,4.5) {$y_?$};
\node at (3.5,5.5) {$b_{td}$};
\node at (4.5,5.5) {$b_{td'}$};

\node at (8.5,7.5) {$b_{ij}$};
\node at (2.5,7.5) {$x$};
\node at (3.5,7.5) {$x$};
\node at (4.5,7.5) {$x$};
\node at (8.5,2.5) {$b_{rj}$};
\node at (8.5,5.5) {$b_{tj}$};

\draw[thick] (3,5) -- (3,0);
\draw[very thin] (5,6) -- (5,0);
\draw[very thin] (8,10) -- (8,6);
\draw[very thin] (9,10) -- (9,6);

\draw[very thin] (0,1) -- (3,1);
\draw[very thin] (0,2) -- (3,2);
\draw[very thin] (1,3) -- (3,3);
\draw[very thin] (2,4) -- (3,4);
\draw[thick] (5,6) -- (10,6);
\end{tikzpicture}
\end{tabular}
If the complement of the $x$-area is made of horizontal constant-valued stripes only, as is shown in the left-hand picture above, let $k:=m$ and the statement of the lemma is satisfied. Otherwise, let $k$ be the maximal column index such that $\forall i,j,j',\,(j\leq j'\leq k\wedge s(j)<i)\Rightarrow b_{ij}=b_{ij'}$ (so that $k<m$ due to the case disjunction) and such that $s(k+1)<s(k)$. The column of index $k$ is the one displaying $x$, $y_?$, $y_r$, and $y_n$ in the right-hand picture above. By definition of $k$, Assertions~\ref{ind-fs:cond1} and \ref{ind-fs:cond3} are satisfied.

If $s(m)=s(k+1)$, Assertion~\ref{ind-fs:cond2} is satisfied too; otherwise, by definition of $k$ there exist a row index $r$ and column indexes $c$ and $c'$ such that $s(k+1)=s(c')\leq s(c)<r$ and $b_{rc}\neq b_{rc'}$. The right-hand picture above displays the case where $c\leq k$ (so that one may assume $c=k$) and $b_{rk}= y_{r}$, and also the case $k<c$ by using symbols $t,d,d'$ instead of $r,c,c'$. Now let $j$ be a column index such that $s(j)<s(k+1)$ and let $i$ be a row index such that $i\leq s(k+1)$, so that $b_{ic'}=b_{ic}=x$ ($b_{ic}=b_{ik}$ in the picture) and $b_{ij}\neq x$. Since $b_{rc}\neq b_{rc'}$ (or $b_{rc}\neq y_r$ in the picture) by assumption and absence of forbidden rectangles, $b_{ij}=b_{rj}$. Since this holds for every $i\leq s(k+1)$, this proves Assertion~\ref{ind-fs:cond2} about vertical constant-valued stripes. 

Assertion~\ref{ind-fs:cond4} holds since absence of forbidden rectangles is preserved by submatrix. As for Assertion~\ref{ind-fs:cond5}, let $s(k)<i$ and $s(j)<s(k+1)$, and notice that $b_{s(k)k}=x$ and $b_{ik}\neq x$ and $b_{s(k)j}\neq x$, so $b_{ij}=b_{s(k)+1,j}$ or $b_{ij}=b_{ik}$ by absence of forbidden rectangles.

\end{proof}

\begin{proof}[Proof of Lemma~\ref{lem:or}]
Let us proceed by induction on the number $N$ of different elements of $X$ that are involved in $A$. The claim holds for $N=1$, so let $1<N$ and let us assume that the property holds for $N-1$. Lemma~\ref{lem:ind-fs} states that, up to row and column permutation, $A$ is as the left-hand picture below. Let $k$ be as in Lemma~\ref{lem:ind-fs}, that is, the index of the column just to the left of the submatrix $B$ and let $l$ be the index of the row just above $B$. Note that $k$, $l$, $r$, and $c$ could equal $0$, $0$, $n+1$, and $m+1$, respectively. 

\begin{tabular}{cc}
\begin{tikzpicture}[scale=.7]
\draw[thick] (0,0) rectangle (10,10);

\draw[thick] (0,3) -- (1,3);
\draw[dashed] (1,3) -- (1,4) -- (2,4);
\draw[thick] (2,4) -- (2,5) -- (3,5) -- (3,7) -- (5,7) -- (5,8) -- (6,8);
\draw[dashed] (6,8) -- (6,9) -- (7,9);
\draw[thick] (7,9) -- (7,10);

\node at (2,8) {$x$};
\node at (2,.5) {$y_n$};
\node at (2,1.5) {$y_{n-1}$};
\node at (2,2.5) {$y_{n-2}$};
\node at (2.5,4.5) {$y_r$};

\node at (5.5,7.5) [rotate = -90] {$z_c$};
\node at (7.5,8) [rotate = -90] {$z_{m-2}$};
\node at (8.5,8) [rotate = -90] {$z_{m-1}$};
\node at (9.5,8) [rotate = -90] {$z_m$};

\draw[thick] (3,5) -- (3,0);
\draw[very thin] (6,8) -- (6,7);
\draw[very thin] (7,9) -- (7,7);
\draw[very thin] (8,10) -- (8,7);
\draw[very thin] (9,10) -- (9,7);

\draw[very thin] (0,1) -- (3,1);
\draw[very thin] (0,2) -- (3,2);
\draw[very thin] (1,3) -- (3,3);
\draw[very thin] (2,4) -- (3,4);
\draw[thick] (5,7) -- (10,7);

\draw[very thin] (3,5) -- (5,5) -- (5,7);
\node at (4,6) {$B$};
\end{tikzpicture}
&
\begin{tikzpicture}[scale=.7]
\draw[thick] (0,0) rectangle (10,10);

\draw[thick] (0,3) -- (1,3);
\draw[dashed] (1,3) -- (1,4) -- (2,4);
\draw[thick] (2,4) -- (2,5) -- (3,5) -- (3,7) -- (5,7) -- (5,8) -- (6,8);
\draw[dashed] (6,8) -- (6,9) -- (7,9);
\draw[thick] (7,9) -- (7,10);

\node at (2,.5) {$y_n$};
\node at (2,1.5) {$y_{n-1}$};
\node at (2,2.5) {$y_{n-2}$};
\node at (2.5,4.5) {$y_r$};

\node at (3.5,7.5) {$x$};
\node at (4.5,7.5) {$x$};
\node at (8.5,7.5) {$z$};
\node at (3.5,6.5) {$y_i$};
\node at (4.5,6.5) {$y_i$};
\node at (8.5,6.5) {$a_{ij}$};

\draw[thick] (3,5) -- (3,0);
\draw[very thin] (6,8) -- (6,7);
\draw[very thin] (7,9) -- (7,7);
\draw[very thin] (8,10) -- (8,7);
\draw[very thin] (9,10) -- (9,7);

\draw[very thin] (0,1) -- (3,1);
\draw[very thin] (0,2) -- (3,2);
\draw[very thin] (1,3) -- (3,3);
\draw[very thin] (2,4) -- (3,4);
\draw[thick] (5,7) -- (10,7);

\draw[very thin] (3,5) -- (5,5) -- (5,7);
\end{tikzpicture}
\end{tabular}

As a submatrix of a matrix without forbidden rectangles, $B$ has no forbidden rectangle either, and it involves at most $N-1$ different elements from $X$, so by induction hypothesis let $(y_{l+1},\dots,y_{r-1})\in X^{r-l-1}$ and $(z_{k+1},\dots,z_{c-1})\in X^{c-k-1}$ such that $a_{i,j}=y_i$ or $a_{ij}=z_j$ whenever $l< i<r$ and $k<j<c$. For all $l< i<r$, if $a_{ij}=a_{i,k+1}$ for all $k<j<c$, let us replace $y_i$ with $a_{i,k+1}$; likewise, for all $k<j<c$, if $a_{ij}=a_{l+1,j}$ for all $l<i<r$, let us replace $z_j$ with $a_{l+1,j}$. Note that these changes preserve the property that $a_{ij}=y_i$ or $a_{ij}=z_j$ whenever $l< i<r$ and $k<j<c$, and let us show that $(x,\dots,x,y_{l+1},\dots y_n)$ and $(x,\dots,x,z_{k+1},\dots z_m)$ witness the claim. The only problematic case is when $l<i<r$ and $c\leq j$ (and its symmetric case when $k<j<c$ and $r\leq i$), so let $l<i<r$ and $c\leq j$, as in the right-hand picture above. If $a_{ij}=z_j$, this proves the claim, so let us assume that $a_{ij}\neq z_j$. For all $k<j'<c$, the absence of forbidden rectangle implies that $a_{ij'}=a_{ij}$, so $a_{ij}=y_i$ by the replacement performed above.  
\end{proof}

\begin{proof}[Proof of Theorem~\ref{thm:fr-d}]
By induction on the size of the game structure. Let $G$ be a non-stripe-decomposable game structure, \textit{i.e.} all rows and columns of $G$ involve at least two outcomes, and let us prove that $G$ must be corner-decomposable. Let us assume without loss of generality that $G$ has at least four rows and tree columns. Let $G'$ be $G$ minus its lowest row. Since $G'$ has not forbidden rectangle, it is stripe-decomposable or corner-decomposable by induction hypothesis. Let us make a case disjunction.

First case, $G'$ is stripe-decomposable, and since $G$ is not stripe-decomposable, $G'$ must be stripe-decomposable along a column, so that up to column permutation, $G$ looks like the left-hand picture picture below.

\begin{tabular}{c@{\hspace{1cm}}c}
\begin{tikzpicture}[scale=.7]
\draw[thick] (0,0) rectangle (8,6);

\draw[thick] (0,1) -- (1,1);
\draw[dashed] (1,1) -- (2,1) -- (2,3) -- (3,3) -- (3,4) -- (5,4) -- (5,5) ;
\draw[thick] (5,5) -- (5,6);

\node at (.5,5.5) {$x$};
\node at (.5,.5) {$y$};
\node at (3.5,.5) {$z$};
\node at (7.5,.5) {$y'$};
\node at (7.5,2.5) {$z'$};
\node at (7.5,5.5) {$y''$};
\node at (3.5,2.5) {$z''$};
\filldraw [black](.5,2.5) circle (2pt);
\filldraw [black](3.5,5.5) circle (2pt);
\end{tikzpicture}
&
$\begin{array}{|c|c|c|ccc|}
	  \cline{1-6}
		x & x & z &  & x_2 & \\
	  \cline{1-6}
		x & y & y &  & x_1 & \\
	  \cline{1-6}
		z & y & z &  & z' & \\
	  \cline{1-6}
		 &  &  & \multicolumn{3}{c|}{}\\
  	  &  &  & \multicolumn{3}{c|}{}\\
		 &  &  & \multicolumn{3}{c|}{}\\
	  \cline{1-6}
		y_0 & y_1 & y_2 & & x_0 &\\
	  \cline{1-6}
\end{array}$
\end{tabular}

The $x$ staircase area is obtained by permutation of rows (but the lowest row) and columns, as in Lemma~\ref{lem:ind-fs}. To avoid stripe-decomposition, the outcome $y$ must be different from $x$ and the $x$ area must not cover the rightmost part of the highest row. Moreover the bottom-right outcome $y'$ equals $y$, otherwise the rightmost column would be filled with $y'$ to avoid forbidden rectangles, which would in turn imply that $G$ is stripe-decomposable, thus contradicting the assumption. Again to avoid stripe-decomposition, some outcomes $z$ and $z'$ distinct from $y$ must occur in the lowest row and rightmost column, respectively. Subsequently the top-right outcome $y''$ must equal $y$ to avoid forbidden rectangles that would involve $z$. Again to avoid stripe-decomposition, a $z'$ distinct from $y$ must occur in the right-most column. Let $z''$ be the outcome at the intersection of the column of $z$ and row of $z'$. Again to avoid forbidden rectangles, $z''$ must equal $z$ or $z'$, which means that $z''$ is distinct from $x$. (That is why $z''$ was drawn outside of the $x$ area in the first place.) Moreover $z''=z'$ to avoid a forbidden rectangle involving $z''$ and $y''$, and likewise  $z''=z$ to avoid a forbidden rectangle involving $z''$ and $y'$, so $z = z' = z''$. The substructure made of the displayed outcomes plus the two black spots corresponds (by swapping $y$ and $z$) to the top-left corner of the corner-decomposition pattern. By invoking another six times the absence of forbidden rectangles, one shows that the requirements from Definition~\ref{def:scd} on $x/z$, $x/y$, and $y/z$ hold.

Second case, $G'$ is corner-decomposable, so $G$ looks like in the right-hand picture above up to row and column permutations. If $(y_0,y_1,y_2) \in \{x,z\} \times \{x,y\} \times \{y,z\}$ then $G$ is corner-decomposable, otherwise let us assume without loss of generality that $y_0\notin\{x,z\}$. The rectangle that involves $y_0$, $x$, $z$, and $y_2$ shows that $y_0 = y_2$, then the rectangle that involves $y_1$, $x$, $z$, and $y_2$ shows that $y_1 = y_2$. Since $G$ is not stripe-decomposable some $x_0$ distinct from $y_0$ must occur in the lowest row. Let $x_2$ be the outcome occurring in the same column as $x_0$ and in the top row (or more precisely the row among the top three rows that involves only $x$ and $z$, from which $y_0$ was assumed distinct). Since $x_2\in\{x,z\}$ but is distinct from one of then, the rectangles involving $y_1$ and $x_2$ on the one hand, and $y_2$ and $x_2$ on the other hand shows that $x_0 = x_2$. Let us make a nested case disjunction. First subcase, if $x_0 = z$, the rectangle that involves $x_1$, $x_2$, $y$, and $z$ shows that $x_1 = y$, so that the rectangle involving $y_0$ and $x_1$ is a forbidden rectangle. Second subcase, $x_0 = x$, so that the rectangle that involves $y_0$ and $x_1$ shows that $x_1 = x$, and the rectangle that involves $y_2$ and $z'$ shows that $z' = z$, and subsequently the rectangle that involves $z'$ and $y_1$ shows that $y_1 = y$. Finally the substructure that involves $y_1$, $y_2$, $x_0$, $z'$, and $x_2$ (and the four remaining $x$, $y$, and $z$ twice) corresponds to the top-left corner of the corner-decomposition pattern.
\end{proof}






\end{document}